\newcommand{\PPAc}{\texttt{PPFc}\xspace}
\newcommand{\JAc}{\texttt{JFc}\xspace}
\newcommand{\GA}{\texttt{GA}\xspace}
\newcommand{\TB}{\texttt{Tab}\xspace}
\newcommand{\X}{\bigcirc}
\newcommand{\nexttime}{\X}
\newcommand{\future}{\Diamond}
\newcommand{\globally}{\square}
\newcommand{\until}{\mathcal{U}}
\newcommand{\release}{\mathcal{R}}
\newcommand{\weakuntil}{\mathcal{W}}
\newcommand{\externalContrastyFilter}{\textsc{externalContrastyFilter}\xspace}
\newcommand{\BCs}{\mathcal{B}}
\newcommand{\CBCs}{\mathcal{B}_c}
\newcommand{\GBCs}{\mathcal{B}_g}
\newcommand{\ie}{{\it i.e.}}
\newcommand{\eg}{{\it e.g.}}
\newcommand{\St}{{\it s.t.}}
\newcommand{\resp}{{\it resp.}~}
\newcommand{\true}{\top}
\newcommand{\false}{\bot}
\newcommand{\LtlP}{\mathbb{P}}
\newtheorem{definition}{\bf Definition}
\newtheorem{example}{\bf Example}
\newtheorem{property}{\bf Property}
\newtheorem{thm}{\bf Theorem}
\newtheorem{que}{\bf{RQ}}
\newcommand{\funFont}[1]{\textsc{#1}\xspace}
\newcommand{\tabincell}[2]{\begin{tabular}{@{}#1@{}}#2\end{tabular}}
\def\BibTeX{{\rm B\kern-.05em{\sc i\kern-.025em b}\kern-.08em
    T\kern-.1667em\lower.7ex\hbox{E}\kern-.125emX}}
\begin{document}

\title{
    How to Identify Boundary Conditions with Contrasty Metric?
    \thanks{\IEEEauthorrefmark{2}Corresponding author.}
    \thanks{This paper was supported by the Guangdong Province Science and Technology Plan projects (No. 2017B010110011), National Natural Science Foundation of China (No. 61976232), National Key R\&D Program of China (No. 2018YFC0830600), Guangdong Province Natural Science Foundation (No. 2018A030313086 and 2017A070706010 (soft science)).}
}

\author{\IEEEauthorblockN{1\textsuperscript{st} Weilin Luo}
\IEEEauthorblockA{\textit{School of Computer Science and Engineering} \\
\textit{Sun Yat-sen University}\\
Guangzhou, China \\
luowlin3@mail2.sysu.edu.cn}
\and
\IEEEauthorblockN{2\textsuperscript{nd} Hai Wan\IEEEauthorrefmark{2}}
\IEEEauthorblockA{\textit{School of Computer Science and Engineering} \\
\textit{Sun Yat-sen University}\\
Guangzhou, China \\
wanhai@mail.sysu.edu.cn}
\and
\IEEEauthorblockN{3\textsuperscript{rd} Xiaotong Song}
\IEEEauthorblockA{\textit{School of Computer Science and Engineering} \\
\textit{Sun Yat-sen University}\\
Guangzhou, China \\
songxt5@mail2.sysu.edu.cn}
\and
\IEEEauthorblockN{4\textsuperscript{th} Binhao Yang}
\IEEEauthorblockA{\textit{School of Computer Science and Engineering} \\
\textit{Sun Yat-sen University}\\
Guangzhou, China \\
yangbh7@mail2.sysu.edu.cn}
\and
\IEEEauthorblockN{5\textsuperscript{th} Hongzhen Zhong}
\IEEEauthorblockA{\textit{School of Computer Science and Engineering} \\
\textit{Sun Yat-sen University}\\
Guangzhou, China \\
zhonghzh5@mail2.sysu.edu.cn}
\and
\IEEEauthorblockN{6\textsuperscript{th} Yin Chen}
\IEEEauthorblockA{\textit{School of Computer Science} \\
\textit{South China Normal University}\\
Guangzhou, China \\
ychen@scnu.edu.cn}
}

\maketitle

\begin{abstract}
The {\em boundary conditions (BCs)} have shown great potential in requirements engineering because a BC captures the particular combination of circumstances, \ie, {\em divergence}, in which the goals of the requirement cannot be satisfied as a whole.
Existing researches have attempted to automatically identify lots of BCs.
Unfortunately, a large number of identified BCs make assessing and resolving divergences expensive.
Existing methods adopt a coarse-grained metric, {\em generality}, to filter out less general BCs.
However, the results still retain a large number of redundant BCs since a general BC potentially captures {\em redundant} circumstances that do not lead to a divergence.
Furthermore, the {\em likelihood} of BC can be misled by redundant BCs resulting in costly repeatedly assessing and resolving divergences.

In this paper, we present a fine-grained metric to filter out the redundant BCs.
We first introduce the concept of {\em contrasty} of BC.
Intuitively, if two BCs are contrastive, they capture different divergences.
We argue that a set of contrastive BCs should be recommended to engineers, rather than a set of general BCs that potentially only indicates the same divergence.
Then we design a post-processing framework (\PPAc) to produce a set of contrastive BCs after identifying BCs.
Experimental results show that the contrasty metric dramatically reduces the number of BCs recommended to engineers.
Results also demonstrate that lots of BCs identified by the state-of-the-art method are redundant in most cases.
Besides, to improve efficiency, we propose a joint framework (\JAc) to interleave assessing based on the contrasty metric with identifying BCs.
The primary intuition behind \JAc is that it considers the search bias toward contrastive BCs during identifying BCs, thereby pruning the BCs capturing the same divergence.
Experiments confirm the improvements of \JAc in identifying contrastive BCs.
\end{abstract}

\begin{IEEEkeywords}
    Goal-Oriented Requirement Engineering, Boundary Conditions, Goal-Conflict Identification
\end{IEEEkeywords}

\section{Introduction}\label{sec:introduction}

{\em Goal-oriented requirement engineering (GORE)}~\cite{van2009requirements} is an essential phase of the software development life cycle,
the important task of which is to attain correct software requirements specifications.
Many researches have demonstrated the significant advantages that formal and goal-oriented approaches help generate correct specifications~\cite{alrajeh2009learning,degiovanni2014automated,ellen2014detecting}.
In such approaches, {\em domain properties} and {\em goals} are represented in {\em linear-time temporal logic (LTL)} because LTL is proved convenient for abstracting specifications of a large class of requirements, assumptions, and domain properties~\cite{van2009requirements}.

The identify-assess-control cycle in GORE aims at identifying, assessing, and resolving inconsistency in which the goals of the requirement cannot be satisfied as a whole.
The {\em divergence} is a weak inconsistency, \ie, particular circumstances where the satisfaction of some goals inhibits the satisfaction of others.
A divergence is captured by {\em boundary conditions (BCs)} which explain why the divergence happens.
Various approaches~\cite{van1998managing,degiovanni2016goal,degiovanni2018genetic} have been proposed to automatically identify BCs in the context of GORE.

As the number of identified BCs in the identification stage increases, for example, there are more than $100$ BCs in the case named London Ambulance Service in~\cite{degiovanni2018genetic}, the assessment stage and the resolution stage become very expensive, and even impractical.
In order to provide engineers with an acceptable number of BCs to analyze, the {\em generality} metric (Definition~\ref{def:gbc})~\cite{degiovanni2018genetic} has been proposed to automatically filter out the less general BCs.
The generality metric qualitatively distinguishes the importance of BC using the implication relationship of BCs.
Intuitively, a more general (also known as weaker) BC is more important because it potentially covers more circumstances to represent a divergence.
Therefore, the less general BCs can be filtered out by the more general one.

Unfortunately, we observe that a set of general BCs still retains a large number of redundant BCs.
The reason is that the generality metric can be considered as a coarse-grained metric.
A general BC potentially captures {\em redundant circumstances} that do not lead to a divergence.

Furthermore, the accuracy of the assessment step based on {\em likelihood} is sensitive to the redundant circumstances, so a set of general BCs can lead to mistakes in the assessment step (an example shown in Section~\ref{sec:motivation}).
The assessment stage is concerned with evaluating how likely the identified conflicts are, and how likely and severe are their consequences.
Degiovanni et al.~\cite{degiovanni2018goal} proposed an automatically assessing method based on model counting,
which can be used to prioritize BCs to be resolved.
However, a set of general BCs misleads to prioritize the BCs because a general BC potentially captures redundant circumstances that do not lead to a divergence.

In this paper, we present a new metric to assess the differences among the divergences captured by BCs.
Our approach is novel in the following respects:
(1) It is a fine-grained metric because it can filter out not only the less general BCs but also the BCs that capture the same divergence;
(2) and it measures the differences between BCs from the different divergences captured by them.
We first introduce the concept of {\em contrasty} of BCs motivated by avoiding boundary conditions~\cite{van1998managing} in resolving divergences.
More precisely, given two BCs $\phi$ and $\varphi$, we consider whether $\phi' = \phi \land \neg \varphi$ and $\varphi' = \varphi \land \neg \phi$ are BCs.
$\phi'$ (\resp $\varphi'$) represents the circumstances left by removing the circumstances captured by $\varphi$ (\resp $\phi$) from that captured by $\phi$ (\resp $\varphi$).
If neither $\phi'$ nor $\varphi'$ is BC, $\phi$ and $\varphi$ are contrastive.
Intuitively, if two BCs are contrastive, they capture different divergences.
We argue that a set of contrastive BCs should be recommended to engineers, rather than a set of general BCs since they potentially only indicate the same divergence.

Based on the contrasty metric, we design a post-processing framework (\PPAc) to produce a set of contrastive BCs after identifying BCs.
Experimental results show that the contrasty metric can filter out all the BCs that capture the same divergence, which dramatically reduces the number of BCs recommended to engineers.
Furthermore, experiments show that the BCs identified by the state-of-the-art method are not contrastive in most cases.
In other words, these BCs capture the same divergence, in which engineers only consider one BC to resolve a divergence while others are redundant.

In order to improve efficiency, we propose a joint framework (\JAc) to interleave assessing based on the contrasty metric with identifying BCs.
Specifically, when a BC is identified during the search, we add its negation as an additional constraint to domain properties.
The additional constraint makes the domain properties dynamically change so that it prevents the same circumstances from being identified as a BC again.
The insight behind this is that it produces the search bias towards the BCs that capture different divergences.
Besides, we propose a sufficient condition for the case where there not exist BCs.
It guarantees that if we resolve the divergences captured by the BCs in the set of contrastive BCs, there not exist divergences under the original domain properties and goals.
Experiments confirm the improvements of \JAc in identifying contrastive BCs.

Our main contributions are summarized as follows.
\begin{itemize}
    \item We present the novel contrasty metric to evaluate the differences between BCs 
    , which can filter out more redundant BCs that capture the same divergence.
    \item We design a post-processing framework (\PPAc) to produce a set of contrastive BCs.
    In order to improve efficiency, we also design a joint framework (\JAc) to capture different divergences during the search.
    \item Experiments show that the contrasty metric is better than the generality metric for filtering out redundant BCs.
\end{itemize}

\section{Background}\label{sec:background}

In this section, we introduce the background of goal-conflict analysis and linear-time temporal logic.
We briefly recall some basic notions for the rest of the paper.

\subsection{Goal-Conflict Analysis}\label{sec:background-gci}

In GORE~\cite{van2009requirements}, goals are prescriptive statements that the system must achieve, and domain properties are descriptive statements that capture the domain of the problem world.
In practice, it is unrealistic to require requirements specifications to be complete or all goals to be satisfiable, because inconsistencies may occur.
{\em Goal-conflict analysis}~\cite{van1998integrating,van2009requirements} deals with the inconsistencies via the following identify-assess-control cycle:
\begin{enumerate}
    \item the \textit{identification stage} is to identify a condition whose occurrence makes some inconsistencies;
    \item the \textit{assessment stage} is to assess and prioritize the identified inconsistencies according to their likelihood and severity;
    \item the \textit{resolution stage} is to resolve the identified inconsistencies by providing appropriate countermeasures.
\end{enumerate}

\noindent{\bf Goal-Conflict Identification.}
In this paper, we focus on a weak inconsistency, \ie, {\em divergence}.
A divergence essentially represents a {\em boundary condition (BC)} whose occurrence results in the loss of satisfaction of the goals, which makes the goal divergence~\cite{van1998managing}.
\begin{definition}\label{def:bc}
    Let $G = \{g_1, \dots, g_n\}$ be a set of goals and $Dom$ a set of domain properties.
    A {\em divergence} occurs within $Dom$ iff there exists a {\em boundary condition} $\varphi$ under $Dom$ and $G$ such that the following conditions hold:
    \begin{flalign*}
    &Dom \wedge G  \wedge \varphi \models \bot    \tag{\text{logical inconsistency}} \\
    &Dom \wedge  G_{-i} \wedge \varphi \not \models \bot \text{, for  each } \! 1 \leq i \! \leq \! n  \tag{\text{minimality}} \\
    & \neg G \not \equiv \varphi \tag{\text{non-triviality}}
    \end{flalign*}
    where $G=\bigwedge_{1\leq i \leq n}g_i$ and $G_{-i} = \bigwedge_{j\not =i}g_j$.
\end{definition}

Intuitively, a BC captures a particular combination of circumstances in which the goals cannot be satisfied as a whole.
The logical inconsistency property means the conjunction of goals becomes inconsistent when $\varphi$ holds.
The minimality property states that disregarding any of the goals no longer results in inconsistency.
The non-triviality property forbids a BC to be a trivial condition which is the negation of the conjunction of the goals.
Note that BCs are not false due to the minimality property.

Specifying software requirements in the LTL formulation allows us to employ automated LTL satisfiability solvers to check for the feasibility of the corresponding requirements.
With an efficient LTL satisfiability solver, we can automatically check if the generated candidate formulae are valid BCs or not by checking if they satisfy the properties.

In the identification stage, the {\em generality}~\cite{degiovanni2018genetic} metric has been proposed to reduce the redundant BCs.
It is defined as follows.

\begin{definition}\label{def:gbc}
    Let $S$ be a set of BCs.
    A BC $\varphi_i \in S$ is {\em more general} than another BC $\varphi_j \in S$ if $\varphi_j$ implies $\varphi_i$.
\end{definition}

Intuitively, a more general BC $\varphi$ captures all the particular combinations of circumstances captured by the less general BCs than $\varphi$.
Therefore, it is important to provide engineers with more general BCs.
As far as we know, the generality metric is the only metric to filter out BCs.

\noindent{\bf Goal-Conflict Assessment.}
In the assessment stage, in order to give engineers more guidance on which BCs need to get attention, probabilities~\cite{cailliau2012probabilistic} of their occurrence are considered as an important indicator.
For systems without extra probabilistic information, there is an approach~\cite{degiovanni2018goal} based on model counting to analyze the likelihood of BCs.
It is defined as follows.

\begin{definition}\label{def:likelihood}
    Let $\phi$ be a BC, $Dom$ domain properties, and $k$ a positive integer.
    The likelihood of $\phi$ is $L(\phi) = \frac{\#(Dom \cup \phi, k)}{\#(Dom, k)}$ where $\#(C,k)$ denotes that the total number of models bases of length $k$ satisfying constraints in $C$.
\end{definition}

Intuitively, the larger likelihood of a BC indicates that the divergence captured by the BC is more likely to happen.

\noindent{\bf Goal-Conflict Resolution.}
In the resolution stage, as the BCs malfunction the system when the system reaches the circumstances captured by BCs, the engineers need some strategies to resolve the divergences captured by the BCs. 

\begin{definition}
    Let $Dom$ be domain properties, $G$ goals, and $\phi$ a BC under $Dom$ and $G$. 
    {\em Resolving divergences} aims to modify $Dom$ and $G$ to get $Dom'$ and $G'$, so that $\phi$ under $Dom'$ and $G'$ does not fulfill at least one of the following constraints: 
    \begin{enumerate}
        \item $Dom' \wedge G' \wedge \varphi \models \bot$;
        \item $Dom' \wedge  G'_{-i} \wedge \varphi \not \models \bot \text{, for  each } \! 1 \leq i \! \leq \! n$;
        \item $\neg G' \not \equiv \varphi$.
    \end{enumerate}
\end{definition}

Intuitively, after resolving divergences, the circumstances captured by the BC do not happen under the new system expressed by updated domain properties and goals.
Van Lamsweerde et al.~\cite{van1998managing} proposed that generating reasonably updated domain properties and goals is an open problem because it requires a lot of experience. 
We will illustrate an example of resolving divergences in~\ref{sec:relatedwork}.
Therefore, a large number of identified BCs make the resolution stage very expensive.

A straightforward strategy can be adopted to avoid the circumstances captured by a BC.
The {\em avoid pattern}~\cite{van1998managing} is therefore introduced: $\globally (Dom \to \globally \neg B)$ where $B$ denotes a BC to be inhibited.

\subsection{Linear-Time Temporal Logic}\label{sec:background-ltl}

Linear-Time Temporal Logic (LTL)~\cite{pnueli1977temporal} is widely used to describe infinite behaviors of discrete systems, which is suitable for specifying software requirements~\cite{van1998managing}.
Throughout this paper, we use lower case letters (\eg, $p$, $h$) to denote propositions.
The syntax of LTL for a finite set of propositions $\LtlP$ includes the standard logical connectives ($\wedge$, $\vee$, $\neg$), $\mathbb{B}=\{\false, \true\}$, and temporal operators {\em next} ($\X$), {\em until} ($\until$).
$$
\varphi \coloneqq \false \mid \true \mid p \mid \varphi_1 \wedge \varphi_2 \mid \varphi_1 \vee \varphi_2 \mid \neg \varphi \mid \X \varphi \mid \varphi_1 \until \varphi_2
$$

Operator release ($\release$), eventually ($\future$), always ($\globally$), and weak-until ($\weakuntil$) are commonly used, and can be defined as
$\varphi_1\release\varphi_2 \coloneqq \neg(\neg \varphi_1\until \neg \varphi_2)$,
$\future \varphi \coloneqq \true \until \varphi$,
$\globally \varphi \coloneqq \neg(\true \until \neg \varphi)$, and
$\varphi_1 \weakuntil \varphi_2 \coloneqq \varphi_1 \until (\varphi_2 \lor \globally \varphi_1)$, respectively.
We use $|\varphi|$ to denote the {\em size} of the formula $\varphi$, \ie, the number of temporal operators, logical connectives, and literals in $\varphi$.


LTL formulae are interpreted over a {\em linear-time structure}.
A linear-time structure is a pair of $W = (S,\varepsilon)$ where $S$ is a state sequence and $\varepsilon: S \rightarrow 2^\LtlP$ is a function mapping each state $s_i$ to a set of propositions.
Let $W$ be a linear-time structure, $i \geq 1$ a position, and $\varphi_1$, $\varphi_2$ two LTL formulae.
The {\em satisfaction relation} $\models$ is defined as follows:
\begin{center}\begin{tabular}{lcl}
       $W,i \models p$ &iff& $p \in \varepsilon(s_i) \text{, where } p \in \LtlP $
       \\
       $W,i \models \neg \phi$ &iff& $W,i \not \models \phi$
       \\
       $W,i \models \varphi_1 \wedge \varphi_2$ &iff& $W,i \models \varphi_1 \textsf{ and } W,i \models \varphi_2$
       \\
       $W,i \models \nexttime \varphi$ &iff& $W,i+1 \models \varphi$
       \\
       $W,i \models \varphi_1 ~\until~ \varphi_2$ &iff& $\exists k \geq i \text{ s.t. } W,k \models \varphi_2 \text{ and }$\\
       && $\forall i \leq j < k, W,j \models \varphi_1$
\end{tabular}\end{center}

An LTL formula $\varphi$ is called {\em satisfiable} if and only if there is a linear-time structure (model) satisfying $\varphi$.
An LTL formula $\varphi$ {\em implies} an LTL formula $\varphi'$, noted $\varphi \rightarrow \varphi'$, 
if the models of $\varphi$ are also models of $\varphi'$.
The LTL satisfiability problem is to check whether an LTL formula is satisfiable, which is PSPACE-complete~\cite{sistla1985complexity}.
Recently, LTL satisfiability checkers based on different techniques have been developed.
Among these checkers, nuXmv~\cite{cavada2014nuxmv} and Aalta~\cite{li2015sat} have achieved better performance.

\section{Motivating Example}\label{sec:motivation}

In this section, we will illustrate the drawbacks of the generality metric through an example and discuss the insights behind the contrasty metric.
Below we illustrate an example, MinePump~\cite{kramer1983conic}.
\begin{example}\label{example}
    Consider a system to control a pump inside a mine. The main goal of the system is avoiding flood in the mine. The system has two sensors. One detects the high water level ($h$), the other detects methane in the environment ($m$). When the water level is high, the system should turn on the pump ($p$). When there is methane in the environment, the pump should be turned off. Domain property ($Dom$) and goals ($G$) are represented via the following LTL formulae.

    \noindent \textbf{Domain Property:}
    \begin{enumerate}
        \item \textbf{Name}: PumpEffect ($d_1$) \\
        \textbf{Description}: The pump is turned on for two time steps, then in the following one the water level is not high. \\
        \textbf{Formula}:
        $\globally((p \wedge \nexttime p) \rightarrow \nexttime(\nexttime \neg h))$
    \end{enumerate}
    \noindent \textbf{Goals:}
    \begin{enumerate}
        \item \textbf{Name}: NoFlooding ($g_1$) \\
        \textbf{Description}: When the water level is high, the system should turn on the pump.\\
        \textbf{Formula}: $\globally(h \rightarrow \nexttime(p))$
        \item \textbf{Name}: NoExplosion ($g_2$)\\
        \textbf{Description}: When there is methane in the environment, the pump should be turned off.\\
        \textbf{Formula}: $\globally(m \rightarrow \nexttime(\neg p))$
    \end{enumerate}
\end{example}

Although the specification is consistent, \ie, all domain properties and goals can simultaneously be satisfied, this specification exhibits some goal divergences.
One of the BCs is $\varphi_1 = \future(h \wedge m)$, which captures the circumstances where the high water level and the methane occur at the same time.
Under this situation, two goals are unsatisfiable simultaneously within domain property.

We also consider other two BCs $\varphi_2 = h \land m$ and $\varphi_3 = \future(h \land \lnot m \land p \land \nexttime(\lnot h \land \lnot p \lor h \land (m \lor \lnot p)))$.
$\varphi_2$ captures the circumstances that the water level is high and the methane occurs at the beginning.
Through equivalent transformation, we can obtain $\varphi_3 = \future( (h \land \lnot m \land p) \land \nexttime( (\neg h \land m \land \neg p) \lor (h \land m \land p) \lor (\neg h \land \neg m \land \neg p) \lor (h \land m \land \neg p) \lor (h \land \neg m \land \neg p) ) )$.
Clearly, $\varphi_3$ captures five circumstances, where, in the future, the system will migrate from the state where the high water level occurs, the methane does not occur, and the pump is turned on ($h \land \lnot m \land p$) to the state described as follows.
\begin{enumerate}
    \item the high water level does not occur, the methane occurs, and the pump is not turned on ($\neg h \land m \land \neg p$);
    \item the high water level and the methane occur and the pump is turned on ($h \land m \land p$);
    \item the high water level and the methane do not occur and the pump is not turned on ($\neg h \land \neg m \land \neg p$);
    \item the high water level and the methane occur and the pump is not turned on ($h \land m \land \neg p$);
    \item the high water level occurs, the methane does not occur, and the pump is not turned on ($h \land \neg m \land \neg p$).
\end{enumerate}

Existing methods can search for a large number of BCs.
It makes the assessment and resolution stages very expensive, and even impractical.
In order to provide engineers with an acceptable number of BCs to be analyzed, it is necessary to proposed metric to filter out the redundant BCs.

If we apply the generality metric, we filter out $\varphi_2$ because $\varphi_1$ is more general than $\varphi_2$.
However, the generality metric cannot evaluate $\varphi_1$ and $\varphi_3$ since the generality relationship between them does not hold.
In the assessment stage, if we compute the likelihood based on the method~\cite{degiovanni2018goal}, we can classify $\varphi_3$ as being more likely than $\varphi_1$ in the long term, and engineers should prioritize $\varphi_3$ in the search for mechanisms that would allow us to reduce the chances of reaching $\varphi_3$.

Unfortunately, the assessment method~\cite{degiovanni2018goal} lacks the accuracy to compute the likelihood by model counting because some models are meaningless, \ie, there does not exist the circumstances to lead the divergence in reality.
Considering the circumstances captured by $\varphi_3$, we observe that the circumstances (1), (3), (4), and (5) violate $g_1$.
Therefore, they cannot satisfy the minimality of BC, which means that they cannot capture the divergence in reality.
These circumstances are {\em redundant}, so $\varphi_3' = \future( (h \land \lnot m \land p) \land \nexttime( h \land p \land m ) )$ stands for the circumstances captured by $\varphi_3$.
We find that $\varphi_1$ is more likely than $\varphi_3'$ using the assessment method~\cite{degiovanni2018goal}, so $\varphi_1$ should be prioritized.
Situations like this show that the accuracy of the assessment method based on likelihood is sensitive to redundant circumstances.

In addition, we find that $\varphi_1$, $\varphi_2$, and $\varphi_3$ capture the same divergence, in which the high water level and the methane occur at the same time, \ie, the circumstance captured by $\varphi_1$.
It is very useful to identify the BC like $\varphi_1$ in resolving divergences.
Engineers only resolve $\varphi_1$ instead of resolving $\varphi_3$ first and then $\varphi_1$.
It avoids wasting computing resources caused by assessing and resolving redundant BCs.

In this paper, motivated by avoiding boundary conditions~\cite{van1998managing} in resolving divergence, we introduce the concept of witness (Definition~\ref{def:witness}) and contrasty (Definition~\ref{def:contrasty}) of BCs.
Intuitively, the witness of a BC indicates the cause of divergence.
If the two BCs are not mutual witnesses, then the two BCs are contrastive, \ie, they capture different divergences.
In this case, $\varphi_1$ and $\varphi_3$ are not contrastive because $\varphi_1$ is a witness of $\varphi_3$, but not vice versa, which means that the divergences captured by $\varphi_1$ are wider than that captured by $\varphi_3$.
Therefore, we recommend $\varphi_1$ to engineers and filter out $\varphi_3$.


\section{Identifying Boundary Conditions with Contrasty Metric}\label{sec:contrasty}

In this section, we first introduce the concept of contrasty of BCs.
Then, we design a post-processing framework to identify a set of contrastive BCs. 

\subsection{Contrasty}

We first introduce the concepts of {\em witness} and {\em contrasty}. 

\begin{definition}\label{def:witness}
    Let $f$ be an LTL formula and $\varphi$ a BC.
    $f$ is a {\em witness} of $\varphi$ iff $\varphi \land \neg f$ is not a BC.
\end{definition}

In the definition, motivated by avoiding boundary conditions~\cite{van1998managing}
in resolving divergences, we use a negative LTL formula to avoid some circumstances, \ie, resolving the divergence.
Therefore, the witness $f$ of a BC $\varphi$ indicates why $\varphi$ is a BC.
If $f$ is a BC, it means that the divergence captured by $\varphi$ is also captured by $f$.

\begin{definition}\label{def:contrasty}
    Let $\phi$ and $\varphi$ be BCs.
    $\phi$ and $\varphi$ are {\em contrastive}, iff $\phi$ is not a witness of $\varphi$ and $\varphi$ is not a witness of $\phi$.
\end{definition}

\begin{definition}
    Let $\CBCs$ be a set of BCs.
    $\CBCs$ is contrastive, iff $\forall \phi, \varphi \in \CBCs \land \phi \neq \varphi$, $\phi$ and $\varphi$ is contrastive.
\end{definition}

Intuitively, the contrastive BCs capture different divergences.
We use an example to illustrate the definition of witness and contrasty.

\begin{example}[Example~\ref{example} cont.]
    $\varphi_1 = \future(h \wedge m)$, $\varphi_2 = h \wedge m$, and $\varphi_3 = \future(h \land \lnot m \land p \land \nexttime(\lnot h \land \lnot p \lor h \land (m \lor \lnot p)))$.
    Because $\varphi_1 \land \neg \varphi_3$ is also a BC, \eg, it captures the circumstances where the high water level and the methane occur at the beginning, $\varphi_3$ is not a witness of $\varphi_1$.
    $\varphi_1$ is a witness of $\varphi_3$ since $\varphi_3 \land \lnot \varphi_1$ does not satisfy the minimality constraint of BC, \ie, $d_1 \land g_1 \land (\varphi_3 \land \neg \varphi_1)$ is unsatisfiable.
    Therefore, $\varphi_1$ and $\varphi_3$ are not contrastive.
    $\varphi_1$ is a witness of $\varphi_2$ and $\varphi_2$ is not a witness of $\varphi_1$, so $\varphi_1$ and $\varphi_2$ are not contrastive.  
    Intuitively, $\varphi_1$ is more important than $\varphi_2$ since the divergence captured by $\varphi_1$ is wider than that captured by $\varphi_2$ ($\varphi_2$ is a special case of $\varphi_1$).
    $\varphi_2$ and $\varphi_3$ are contrastive since they express the occurrence of the high water level and the methane in different situations.
\end{example}

Based on the definition, we have the following theorems.
These theorems indicate the highlight of the contrasty metric.


    


\begin{thm}\label{thm:gen-con}
    Let $\phi$ and $\varphi$ be BCs.
    If $\phi \to \varphi$, then $\varphi$ is a witness of $\phi$. 
\end{thm}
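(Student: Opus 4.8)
The plan is to unfold the definition of witness (Definition~\ref{def:witness}) and show that under the hypothesis $\phi \to \varphi$, the formula $\phi \land \neg \varphi$ fails to be a BC. By Definition~\ref{def:witness}, $\varphi$ is a witness of $\phi$ precisely when $\phi \land \neg \varphi$ is \emph{not} a BC, so this is exactly the target. The key observation is that the implication $\phi \to \varphi$ means every model of $\phi$ is a model of $\varphi$, hence $\phi \land \neg \varphi$ has no models at all, \ie, $\phi \land \neg \varphi \equiv \false$.

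First I would argue semantically: since $\phi \to \varphi$, the conjunction $\phi \land \neg \varphi$ is unsatisfiable and therefore logically equivalent to $\false$. Next I would invoke the remark stated just after Definition~\ref{def:bc}, namely that a BC cannot be $\false$ because of the minimality property (``Note that BCs are not false due to the minimality property''). Concretely, the minimality condition requires $Dom \wedge G_{-i} \wedge (\phi \land \neg \varphi) \not\models \bot$ for each $i$, but if $\phi \land \neg \varphi \equiv \false$ then $Dom \wedge G_{-i} \wedge \false \models \bot$ trivially, violating minimality. Thus $\phi \land \neg \varphi$ is not a BC, which by Definition~\ref{def:witness} makes $\varphi$ a witness of $\phi$.

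I expect the proof to be short, with the only subtlety being the direction of the implication arrow and the corresponding direction of the witness relation: one must be careful that $\phi \to \varphi$ yields ``$\varphi$ is a witness of $\phi$'' (checking $\phi \land \neg \varphi$) and not the reverse. The main thing to get right is simply matching the syntactic form $\phi \land \neg \varphi$ appearing in the witness definition against the formula whose unsatisfiability follows from the hypothesis. No case analysis on the structure of the LTL formulae is needed, since the argument is purely at the level of model containment. The potential obstacle, if any, is ensuring that the degenerate formula $\false$ is genuinely excluded from being a BC by the definition rather than merely by convention; this is secured by the minimality clause of Definition~\ref{def:bc}, so the argument closes cleanly.
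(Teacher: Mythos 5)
Your proposal is correct and follows essentially the same route as the paper: the paper's proof is the one-line observation that $\phi \land \neg \varphi$ is unsatisfiable, and you have simply filled in the implicit step that an unsatisfiable formula cannot be a BC because it violates the minimality clause of Definition~\ref{def:bc}. The direction of the witness relation is also handled correctly.
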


It is straightforward to prove Theorem~\ref{thm:gen-con} because $\phi \land \neg \varphi$ is unsatisfiable.
Because of Theorem~\ref{thm:gen-con}, we have Theorem~\ref{thm:finer-grained}.

\begin{thm}\label{thm:finer-grained}
    Let $\CBCs$ be a set of contrastive BCs.
    $\forall \phi,\varphi \in \CBCs \land \phi \neq \varphi$, $\phi \not \to \varphi \land \varphi \not \to \phi$.
\end{thm}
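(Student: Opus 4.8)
The plan is to prove the contrapositive by showing that the generality relation forces a witness relation, and then invoke Theorem~\ref{thm:gen-con} twice. Concretely, I want to show that if $\phi, \varphi \in \CBCs$ with $\phi \neq \varphi$ and the implication $\phi \to \varphi$ held, then $\phi$ and $\varphi$ would fail to be contrastive, contradicting the assumption that $\CBCs$ is a set of contrastive BCs.

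First I would fix two distinct BCs $\phi, \varphi \in \CBCs$ and assume, toward a contradiction, that $\phi \to \varphi$. By Theorem~\ref{thm:gen-con}, since $\phi$ and $\varphi$ are BCs and $\phi \to \varphi$, it follows immediately that $\varphi$ is a witness of $\phi$. By Definition~\ref{def:contrasty}, $\phi$ and $\varphi$ being contrastive requires that \emph{neither} is a witness of the other; in particular $\varphi$ is not a witness of $\phi$. This contradicts the conclusion just obtained, so the assumption $\phi \to \varphi$ must fail, giving $\phi \not\to \varphi$. The symmetric case $\varphi \to \phi$ is handled identically: it would make $\phi$ a witness of $\varphi$ by Theorem~\ref{thm:gen-con}, again contradicting contrasty. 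Hence $\varphi \not\to \phi$ as well, and since $\phi, \varphi$ were arbitrary distinct elements of $\CBCs$, the claim holds for all such pairs.

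The whole argument is essentially a two-line deduction chaining Theorem~\ref{thm:gen-con} into the definition of contrasty, so there is no genuinely hard step; the only thing to be careful about is not confusing the direction of the implication and the corresponding direction of the witness relation. In Theorem~\ref{thm:gen-con} the hypothesis $\phi \to \varphi$ yields that $\varphi$ (the weaker, more general formula) is a witness of $\phi$, because $\phi \land \neg\varphi$ is unsatisfiable and therefore cannot be a BC. I would state explicitly which of the two implications I am ruling out at each stage so that the reader sees both disjuncts $\phi \not\to \varphi$ and $\varphi \not\to \phi$ are established separately. This theorem is really a corollary packaging the observation that contrasty is a strictly finer filter than generality: any generality-comparable pair is automatically non-contrastive, so a contrastive set contains no generality-redundant pairs.
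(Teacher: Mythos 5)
Your proof is correct and follows exactly the route the paper intends: the paper simply remarks that Theorem~\ref{thm:finer-grained} follows from Theorem~\ref{thm:gen-con}, and your argument fills in precisely that deduction---an implication between distinct BCs would create a witness relation via Theorem~\ref{thm:gen-con}, contradicting the definition of contrasty. You also correctly handle the one subtle point, namely the direction of the witness relation (from $\phi \to \varphi$ one gets that $\varphi$ is a witness of $\phi$, since $\phi \land \neg\varphi$ is unsatisfiable and hence not a BC).
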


Theorem~\ref{thm:finer-grained} shows that there is not a general relation between any two BCs in a contrastive BC set, while there can be a witness relation between some two BCs in a general BC set.
According to Theorem~\ref{thm:finer-grained}, the contrasty metric can be regarded as a finer-grained metric than the generality metric because contrasty metric can filter out more redundant BCs than the generality metric. 
Let us recall Example~\ref{example}. 
$\{\varphi_1, \varphi_3\}$ is general, but not contrastive. 
If the contrasty metric is considered, then $\{\varphi_1\}$ is a contrastive.

\begin{property}\label{thm:con-resol}
    Let $\phi$ and $\varphi$ be BCs. 
    If $\phi$ is a witness of $\varphi$ and $\varphi$ is not a witness of $\phi$, then resolving the divergence captured by $\phi$ leads to resolving the divergence captured by $\varphi$.
\end{property}


Property~\ref{thm:con-resol} shows that it is reasonable that engineers prioritize $\phi$ to resolve since the circumstances captured by $\phi$ include the circumstances captured by $\varphi$.

\begin{thm}\label{thm:can-capt-diff-dive}
    Let $\phi$ and $\varphi$ be two BCs. 
    If $\phi$ and $\varphi$ are contrastive, then $\phi$ and $\varphi$ capture different divergences.
\end{thm}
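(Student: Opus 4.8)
The plan is to convert the informal phrase ``capture different divergences'' into a precise statement about the circumstances each BC represents, and then read off the conclusion from the contrasty hypothesis together with Theorem~\ref{thm:gen-con}. Concretely, I would adopt the reading used in the discussion after Definition~\ref{def:witness}: a BC $\psi$ \emph{captures} a divergence $D$ when $D$ is the minimal, non-trivial inconsistency exhibited by the circumstances satisfying $\psi$, and two BCs capture \emph{different} divergences exactly when each one captures some divergence the other does not. Under this reading the goal becomes: produce, for each of $\phi$ and $\varphi$, a divergence it captures but the other misses.

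First I would unfold Definition~\ref{def:contrasty} through Definition~\ref{def:witness}. Since $\phi$ and $\varphi$ are contrastive, $\phi$ is not a witness of $\varphi$ and $\varphi$ is not a witness of $\phi$, which by Definition~\ref{def:witness} means that both $\varphi \land \neg \phi$ and $\phi \land \neg \varphi$ are themselves BCs. This is the key leverage the hypothesis supplies: contrasty manufactures two genuine BCs living in the two ``difference'' regions.

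Next I would argue that $\varphi \land \neg \phi$ carries a divergence $\phi$ cannot capture. On the one hand, $(\varphi \land \neg \phi) \to \varphi$, so by Theorem~\ref{thm:gen-con} $\varphi$ is a witness of $\varphi \land \neg \phi$; since $\varphi$ is a BC, the divergence captured by the BC $\varphi \land \neg \phi$ is also captured by $\varphi$. On the other hand, every circumstance satisfying $\varphi \land \neg \phi$ satisfies $\neg \phi$ and hence lies outside the region represented by $\phi$, so $\phi$ does not capture this divergence. Thus $\varphi$ captures a divergence that $\phi$ does not. Running the identical argument with the roles of $\phi$ and $\varphi$ swapped, now using that $\phi \land \neg \varphi$ is a BC, shows $\phi$ captures a divergence that $\varphi$ does not. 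Combining the two halves yields that $\phi$ and $\varphi$ capture different divergences.

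The main obstacle I anticipate is not the logical manipulation but the bookkeeping around the word ``divergence'': I must fix what it means for one BC to capture the divergence of another and check it is compatible with the intuition stated after Definition~\ref{def:witness}, namely that a witness which is itself a BC captures the same divergence. The delicate step is justifying that the divergence borne by $\varphi \land \neg \phi$ is genuinely absent from $\phi$; this rests on the fact that $\varphi \land \neg \phi$ and $\phi$ have disjoint sets of satisfying circumstances, so the minimal inconsistency produced by the former can never arise within the latter. Once that disjointness is stated cleanly, the remainder is a direct appeal to Theorem~\ref{thm:gen-con} and the contrasty hypothesis.
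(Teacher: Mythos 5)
Your proposal is correct and takes essentially the same route as the paper's own proof sketch: both arguments hinge on unfolding Definition~\ref{def:contrasty} into the fact that $\phi \land \neg\varphi$ and $\varphi \land \neg\phi$ are themselves BCs, and then reading this as each of $\phi$, $\varphi$ retaining a divergence that survives when the other's divergence is removed. Your extra appeals to Theorem~\ref{thm:gen-con} and to the disjointness of the satisfying circumstances simply make explicit the bookkeeping that the paper leaves at the level of intuition.
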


\begin{proof}[Sketch of proof]
    $\phi$ and $\varphi$ are contrastive, so $\phi$ (\resp $\varphi$) is not the witness of $\varphi$ (\resp $\phi$), which means that $\phi \land \lnot \varphi$ (\resp $\varphi \land \lnot \phi$) is still a BC. 
    The primary intuition behind $\phi \land \lnot \varphi$ (\resp $\varphi \land \lnot \phi$) is that after resolving the divergences captured by $\varphi$ (\resp $\phi$), there are still divergences captured by $\phi$ (\resp $\varphi$).
    Therefore, $\phi$ and $\varphi$ capture different divergences.
\end{proof}

Theorem~\ref{thm:can-capt-diff-dive} shows that contrastive BCs capture different divergences.
Therefore, it is meaningful to recommend a set of contrastive BCs to engineers.

According to the above analysis, we argue that a set of contrastive BCs should be recommended to engineers, rather than a set of general BCs since they potentially only indicate the same divergences.
In Section~\ref{sec:exp-adv}, we will discuss the different divergences captured by contrastive BCs and report the advantage of the contrasty metric.

\subsection{Post-Processing Framework}\label{sec:PPAc}

We design a \underline{p}ost-\underline{p}rocessing framework for \underline{f}iltering the BCs based on the \underline{c}ontrasty metric (\PPAc).
It takes a set of BCs ($\BCs$) identified by a BC solver
as inputs. 
Its output is a set of contrastive BCs ($\CBCs$). 

\begin{algorithm}[h]
    \caption{\PPAc}\label{alg:PPAc}
    \KwIn{a set of BCs $\BCs$.}
    \KwOut{a set of contrastive BCs $\CBCs$.}

    $\CBCs \gets \BCs$;\\

    \For{each BC $\phi \in \CBCs$}
    {\label{alg:PPAc-selBC}
        $\CBCs' \gets \CBCs / \phi$;\\
        $isContrastive, W \gets$ \Call{externalContrastyFilter}{$\phi$, $\CBCs'$};\label{alg:PPAc-rela}\\
        \If{$isContrastive$}
        {
            $\CBCs \gets \CBCs / W$;\label{alg:PPAc-remove}\\
        }
        \Else
        {
            $\CBCs \gets \CBCs / \phi$;\label{alg:PPAc-remove}\\
        }
    }
    
    \Return $\CBCs$; \\
\end{algorithm}



        

\begin{algorithm}[h]
    \caption{\funFont{externalContrastyFilter}}\label{alg:excontFil}
    \KwIn{a BC $\phi$ and a set of BCs $\BCs$.}
    \KwOut{whether $\phi$ is contrastive in $\BCs$ and a set of BCs $W$ filtered by $\phi$.}

    $W \gets \emptyset$;\\
    \For{each BC $\varphi \in \BCs$}
    {
        \If{$\phi$ is a witness of $\varphi$ and $\varphi$ is a witness of $\phi$}
        {\label{alg:excontFil-small}
            \If{the size of $\phi$ is larger then that of $\varphi$}
            {
                \Return{False, $\emptyset$};\\
            } 
            \Else{
                $W \gets W \cup \{ \varphi \}$;\\
            }
        }
        \ElseIf{$\phi$ is a witness of $\varphi$ and $\varphi$ is not a witness of $\phi$}{
            \label{alg:excontFil-wit}
            $W \gets W \cup \{ \varphi \}$;\\
        }
        \ElseIf{$\phi$ is not a witness of $\varphi$ and $\varphi$ is a witness of $\phi$}{
            \label{alg:excontFil-notwit}
            \Return{False, $\emptyset$};\\
        }
    }
    \Return{True, $W$};\\
\end{algorithm}

The pseudo code is outlined in Algorithm~\ref{alg:PPAc}.
At each iteration, we choose a BC $\phi \in \CBCs$ (Alg.~\ref{alg:PPAc} of line~\ref{alg:PPAc-selBC}), then discuss its relationship with other BCs $\varphi$ in $\CBCs$ (Alg.~\ref{alg:PPAc} of line~\ref{alg:PPAc-rela}).
If $\phi$ and $\varphi$ are witnesses of each other (Alg.~\ref{alg:excontFil} of line~\ref{alg:excontFil-small}), which means that $\phi$ and $\varphi$ capture the same divergences, we select the one with smaller size\footnote{The BC with smaller size is more compact, and easier to interpret.} to stay in $\CBCs$.
If $\phi$ is a witness of $\varphi$ and $\varphi$ is not a witness of $\phi$ (Alg.~\ref{alg:excontFil} of line~\ref{alg:excontFil-wit}), which means that the divergences captured by $\phi$ is wider than that captured by $\varphi$, we retain $\phi$; otherwise (Alg.~\ref{alg:excontFil} of line~\ref{alg:excontFil-notwit}), we remove $\phi$.
If $\phi$ and $\varphi$ are not witnesses of each other, we do not delete any one because they are contrastive. 

\begin{thm}\label{thm:PPAc-sou}
    When Algorithm~\ref{alg:PPAc} terminates, $\CBCs$ is contrastive.
\end{thm}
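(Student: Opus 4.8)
The plan is to argue by contradiction, exploiting the fact that Algorithm~\ref{alg:PPAc} is purely \emph{eliminative}: $\CBCs$ is initialized to $\BCs$ and every reassignment of $\CBCs$ only deletes elements (either a filtered set $W$ or the current $\phi$). Consequently, any BC that appears in the returned set was present in $\CBCs$ at every point of the execution. So suppose, for contradiction, that the output is not contrastive. By the definition of a contrastive set together with Definition~\ref{def:contrasty}, there are two distinct BCs $\phi,\varphi$ in the final $\CBCs$ such that at least one of them is a witness of the other; by the eliminative property, both $\phi$ and $\varphi$ remain in $\CBCs$ throughout.

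Next I would locate the point at which the algorithm had the opportunity to separate the pair. Since both survive, both are eventually chosen as the outer loop variable, so consider the first iteration in which one of them --- say $\phi$, without loss of generality --- is selected. At that moment $\varphi \in \CBCs' = \CBCs \setminus \{\phi\}$, so the call \externalContrastyFilter$(\phi,\CBCs')$ examines $\varphi$ in its inner loop. The crucial observation is that, because $\phi$ is never removed, this call cannot have returned \textbf{False}: a \textbf{False} return (Alg.~\ref{alg:excontFil}, lines~\ref{alg:excontFil-small} and~\ref{alg:excontFil-notwit}) forces the \texttt{else} branch of Algorithm~\ref{alg:PPAc} to delete $\phi$. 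Hence the filter returned $(\mathbf{True}, W)$, meaning the inner loop ran to completion and in particular processed $\varphi$ without triggering an early return.

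It then remains to do the case analysis on the witness relation between $\phi$ and $\varphi$ when $\varphi$ is processed. Since the pair is not contrastive, exactly one of the three non-trivial branches applies. The two branches that return \textbf{False} --- mutual witnesses with $|\phi| > |\varphi|$, and ``$\varphi$ is a witness of $\phi$ but $\phi$ is not a witness of $\varphi$'' (line~\ref{alg:excontFil-notwit}) --- are already excluded because $\phi$ survived. The remaining admissible branches --- mutual witnesses with $|\phi| \le |\varphi|$ (the \texttt{else} case at line~\ref{alg:excontFil-small}) and ``$\phi$ is a witness of $\varphi$ but $\varphi$ is not a witness of $\phi$'' (line~\ref{alg:excontFil-wit}) --- both execute $W \gets W \cup \{\varphi\}$. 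Since the outer loop then performs $\CBCs \gets \CBCs \setminus W$, the BC $\varphi$ is deleted, contradicting that $\varphi$ lies in the final set. This contradiction shows that no non-contrastive pair can survive, so the returned $\CBCs$ is contrastive.

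I expect the main obstacle to be making the informal loop ``\textbf{for each} BC $\phi \in \CBCs$'' precise while $\CBCs$ is simultaneously being mutated, in combination with the early-return behaviour of \externalContrastyFilter. The argument relies on exactly two semantic facts: every surviving BC is chosen as the outer loop variable at some iteration, and whenever the filter returns \textbf{True} it has inspected every element currently in its input set. Both should be stated explicitly --- for instance by treating the for-each as ranging over the live set and noting that deletions never reinsert elements --- so that the ``without loss of generality $\phi$ is selected first'' step and the ``$\varphi$ was inspected'' step are justified rather than tacitly assumed.
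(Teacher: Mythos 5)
Your proof is correct. It is worth noting that the paper itself gives no argument at all for this theorem --- it simply declares ``It is straightforward to prove Theorem~\ref{thm:PPAc-sou}'' --- so your write-up is, in effect, the missing proof rather than an alternative to an existing one. Your contradiction argument pins down exactly the two points that make the claim non-trivial and that the paper glosses over: (i) because the algorithm is purely eliminative, any BC in the output was live at every iteration, hence every surviving BC is eventually selected by the outer loop and every other surviving BC is in the set $\CBCs'$ passed to \externalContrastyFilter at that moment; and (ii) an early \textbf{False} return of \externalContrastyFilter (Alg.~\ref{alg:excontFil}, lines~\ref{alg:excontFil-small} and~\ref{alg:excontFil-notwit}) is incompatible with the selected $\phi$ surviving, so for a surviving $\phi$ the inner loop must have run to completion, after which the exhaustive three-way case analysis on the witness relation forces any non-contrastive partner $\varphi$ into $W$ and hence out of $\CBCs$. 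One stylistic remark: since the argument shows directly that no non-contrastive pair can survive, the contradiction framing could be flattened into a direct invariant-style statement (``for any two distinct BCs in the output, neither is a witness of the other''), but this changes nothing substantive. Your closing caveat --- that the for-each loop over a mutating set and the early-return semantics must be made precise for the argument to be rigorous --- is exactly right, and is precisely what separates your proof from the paper's bare assertion of straightforwardness.
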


    

It is straightforward to prove Theorem~\ref{thm:PPAc-sou}.
Theorem~\ref{thm:PPAc-sou} guarantees that Algorithm~\ref{alg:PPAc} returns a set of contrastive BCs.
We illustrate our method through a running example as follows.

\begin{example}[Example~\ref{example} cont.]
    Assume that the BC solver returns the set of BC $\BCs = \{ \varphi_1, \varphi_2, \varphi_3 \}$, where $\varphi_1 = \future(h \wedge m)$, $\varphi_2 = h \wedge m$, and $\varphi_3 = \future(h \land \lnot m \land p \land \nexttime(\lnot h \land \lnot p \lor h \land (m \lor \lnot p)))$.
    $\CBCs$ is initialized to $\{ \varphi_1, \varphi_2, \varphi_3 \}$.
    At the first iteration, assume that \PPAc chooses $\varphi_2$. 
    $\varphi_2$ will be compared with $\varphi_1$ and $\varphi_3$.
    Because $\varphi_2$ is not a witness of $\varphi_1$ and $\varphi_1$ is a witness of $\varphi_2$, \externalContrastyFilter returns False and an empty set.
    $\CBCs$ will be updated as $\{ \varphi_1, \varphi_3 \}$.
    At the second iteration, assume that \PPAc chooses $\varphi_1$.
    $\varphi_1$ will be compared with $\varphi_3$.
    Because $\varphi_1$ is a witness of $\varphi_3$ and $\varphi_3$ is not a witness of $\varphi_1$, \externalContrastyFilter returns True and $\{ \varphi_3 \}$.
    $\CBCs$ will be updated as $\{ \varphi_1 \}$.
    Then \PPAc returns $\{ \varphi_1 \}$ and terminates.
\end{example}

\subsection{Discussion about completeness and Performance}

In this paper, we are not concerned with the completeness of identifying contrastive BCs, \ie, the divergences captured by contrastive BCs cover all the divergences captured by BCs that have been found. 
The reason is as follows.

Firstly, we focus on filtering out redundant BCs for better resolving divergences which is the fundamental purpose of GORE. 
In general, the better the identification result is, the easier the resolution stage is. 
Therefore, we argue that the identified BCs should be conducive to resolving divergences as much as possible rather than completeness.

Furthermore, the completeness of the BC set does not help to resolve divergences. 
Resolving divergences is a dynamic process.
After resolving a BC, some BCs in the original BC set are no longer BCs under the updated domain properties and goals. 
In this way, for resolving divergences, it is meaningless to get the complete BC set in the BC identification stage.

For example, a set of general BCs fulfills the completeness, but it still retains a large number of redundant BCs that capture the same divergences, so that engineers will do a lot of meaningless work for resolving divergences. 
In other words, although the generality metric satisfies the completeness, it will also increase the burden of resolving divergences.
By comparison, the contrasty metric first considers the optimization of BC resolving divergences.

\PPAc only begins to filter out redundant BCs after the BC solver returns a set of BCs.
A natural idea is to directly identify contrastive BCs during searching for BCs.
In this way, pruning the BCs capturing the same divergence can be performed directly in the search process, thereby speeding up the searching process.
Based on this idea, we will discuss a joint framework for identifying contrastive BCs in Section~\ref{sec:joint}.

\section{Evaluation of Contrasty}\label{sec:exp-adv}

In this section, we reported the advantage of the contrasty metric.
Here, we presented the first research question.

\begin{que}\label{RQ1}
    Compared with the generality metric, what are the advantages of the contrasty metric?
\end{que}

Given a set of BCs $\BCs$ identified by a BC solver, we applied different metrics to filter out redundant BCs.
For the competitor, we combined the generality metric and the likelihood to filter and sort the BCs.
Specifically, we first filtered out the less general BCs to produce a set of general BCs $\GBCs$ and then sorted them according to the likelihood of BC from high to low. 
Based on \PPAc, we computed a set of contrastive BCs $\CBCs$ and sorted them by the likelihood.
We analyzed the shortcomings of the generality metric and reported the advantages of the contrasty metric by comparing $\GBCs$ and $\CBCs$.

\begin{table}[t]
    \centering
    \caption{The details of cases}\label{tab:case}
    \scalebox{1}{
      \begin{tabular}{c|r|r|r|r}
      \toprule
      Case & \multicolumn{1}{c|}{\#Dom} & \multicolumn{1}{c|}{\#Goal} & \multicolumn{1}{c|}{\#Var} & \multicolumn{1}{c}{Size} \\
      \midrule
      RetractionPattern1 (RP1) & 0     & 2     & 2     & 9 \\
      RetractionPattern2 (RP2) & 0     & 2     & 4     & 10 \\
      Elevator (Ele) & 1     & 1     & 3     & 10 \\
      TCP   & 0     & 2     & 3     & 14 \\
      AchieveAvoidPattern (AAP) & 1     & 2     & 4     & 15 \\
      MinePump (MP)& 1     & 2     & 3     & 21 \\
      ATM   & 1     & 2     & 3     & 22 \\
      Rail Road Crossing System (RRCS)  & 2     & 2     & 5     & 22 \\
      Telephone (Tel) & 3     & 2     & 4     & 31 \\
      London Ambulance Service (LAS)   & 0     & 5     & 7     & 32 \\
      Prioritized Arbiter (PA) & 6     & 1     & 6     & 57 \\
      Round Robin Arbiter (RRA) & 6     & 3     & 4     & 77 \\
      Simple Arbiter (SA) & 4     & 3     & 6     & 84 \\
      Load Balancer (LB) & 3     & 7     & 5     & 85 \\
      LiftController (LC) & 7     & 8     & 6     & 124 \\
      \tabincell{c}{ARM's Advanced Microcontroller\\ Bus Architecture (AMBA)}  & 6     & 21    & 16    & 415 \\
      \bottomrule
      \end{tabular}%
    }
    \label{tab:addlabel}%
\end{table}%

\begin{landscape}
    
\begin{table}[t]
    \centering
    \caption{The BCs produced by different metrics}\label{tab:BC}%
    \scalebox{0.9}{
      \begin{tabular}{c|c|l|c|l|l}
      \toprule
      \multirow{2}[4]{*}{Case} & \multicolumn{2}{c|}{GL} & \multicolumn{3}{c}{CL} \\
  \cmidrule{2-6}          & \multicolumn{1}{c}{Rank} & \multicolumn{1}{c|}{BC} & \multicolumn{1}{c}{Rank} & \multicolumn{1}{c}{BC} & \multicolumn{1}{c}{Witness} \\
      \midrule
      \multirow{4}[2]{*}{RP1} & 1     & ${\future( ( ( p  \land ( \globally ( \neg q ) ) ) \ \until ( \future( q \land ( \neg p ) ) )) \lor ( \globally ( p \land ( \globally ( \neg q ) ) ) )) }$      & 1     & ${ (  p  \land ( \globally ( \neg q ) ) ) \lor ( \future (  q  \land ( \neg p ) )  ) }$   & 1,2,3,{\color{red}4} \\
            & 2     & ${ \nexttime ( ( p \land ( \globally ( \neg q ) ) ) \lor ( \future( q \land ( \neg p  ) ) ) ) }$      &       &       &  \\
            & 3     &  ${ ( ( \neg q \ \until ( q \land \neg p ) ) \ \until ( \future( p \land (  \globally ( \neg q ) ) ) )) \lor ( \globally ( \neg q \ \until ( q \land \neg p ) ))  }$     &       &       &  \\
            & 4     &  ${ ( p \land ( \globally ( \neg q ) ) ) \lor ( \future( q \land \neg p ) ) }$     &       &       &  \\
      \midrule
      \multirow{3}[2]{*}{RP2} & 1     & ${ \future( ( p \land ( \neg q \land \neg s ) ) \lor ( q \land \neg r ) ) }$      & 1     & ${ \future( ( q \land \neg r ) \lor ( \nexttime ( p \land ( \neg s \ \until ( \neg q \land \neg s ) ) ) ) ) }$      & {\color{red}1},2,3 \\
            & 2     & ${ \nexttime ( ( p \land ( \neg s \ \until ( \neg q \land \neg s )  ) ) \lor ( q \land \neg r ) ) }$      &       &       &  \\
            & 3     & ${ ( p \land ( \neg s \ \until ( \neg q \land \neg s ) ) )  \lor ( q \land \neg r )}$       &       &       &  \\
      \midrule
      \multirow{5}[2]{*}{Ele} & 1     & ${  \nexttime ( \future( call \land ( \globally ( \neg open ) ) ) ) }$     & 1     & ${\nexttime ( \future( call \land ( \globally \neg open ) ) )  }$     & {\color{red}1},3 \\
            & 2     & ${ (( \future( \neg atfloor \land ( \nexttime open ) ) ) \ \until ( call \land  ( \globally ( \neg open ) ) )) \lor (\globally ( \future( \neg atfloor \land ( \nexttime open )   )))  }$    & 2     & ${ open \ \until ( call \land ( \globally \neg open ) )    }$   & 2,{\color{red}4} \\
            & 3     & ${ \globally ( \neg atfloor \land ( \nexttime call ) )  }$     & 3     & ${  ( call \land ( \globally ( \neg open ) ) ) \lor ( \nexttime ( \globally ( call \land (  \globally \neg open ) ) ) ) }$      & 2,3,{\color{red}5} \\
            & 4     & ${ open \ \until ( call \land ( \globally \neg open ) ) }$      &       &       &  \\
            & 5     & ${ ( call \land ( \globally ( \neg open ) ) ) \lor ( \nexttime ( \globally ( call \land (  \globally ( \neg open ) ) ) ) ) }$      &       &       &  \\
      \midrule
      \multirow{3}[2]{*}{TCP} & 1     & ${ ( delivered \land ( send \land \neg ack ) ) \lor ( send  \land ( ack \land \neg delivered ) ) }$      & 1     & ${ \future( (( send \land \neg ack ) \until ( send \land ( ack \land \neg delivered ) )) \lor (\globally ( send \land \neg ack )) )  }$     & 1,{\color{red}2},3 \\
            & 2     & ${ \future( (( send \land \neg ack ) \until ( send \land ( ack  \land \neg delivered ) )) \lor ( \globally ( send \land \neg ack ) ) ) }$      &       &       &  \\
            & 3     & \tabincell{l}{$(( delivered \land \neg ack ) \until ( send \land ack  \land \neg delivered )) \lor $\\$ ( \globally (delivered \land \neg ack ) )$}      &       &       &  \\
      \midrule
      \multirow{5}[2]{*}{AAP} & 1     & ${ \future(r \land p)  }$    & 1     & ${ \future( r \land p )  }$     & {\color{red}1},2,3,5 \\
            & 2     & ${ ( r \land p ) \lor ( \future( r \land ( \future s ) ) )  }$     & 2     & ${ r \land ( \nexttime ( p ) )  }$     & {\color{red}4},5 \\
            & 3     & ${ ( r \land p ) \lor ( \nexttime ( r \land ( \future q ) ) )  }$     &       &       &  \\
            & 4     &  ${ r \land ( \nexttime ( p ) )  }$    &       &       &  \\
            & 5     &  ${ (( r \land p ) \until ( \globally ( p \land ( \globally ( \neg q ) ) ) )) \lor ( \globally ( r \land p ))   }$   &       &       &  \\
      \midrule
      \multirow{6}[2]{*}{MP} & 1     & $\future ( h \land  \neg m \land  p \land  \nexttime (\neg h \land  \neg p \lor  hw \land  (m \lor  \neg p)))$       & 1     &$ \future ( h \land  m ) $       & 1,{\color{red}2},3,4,5,6 \\
            & 2     & $ \future ( m \land  h)$       &       &       &  \\
            & 3     & $ ( m \land  h  ) \lor  \future ( h \land  ( \nexttime  \neg p  )) $       &       &       &  \\
            & 4     & $ ( m \land  h ) \lor  ( m  \land  ( \nexttime  p ) ) $       &       &       &  \\
            & 5     & $ ( m \land  ( \nexttime  p ) ) \ \until ( \nexttime ( m \land  h ) ) \lor  \globally ( m \land  ( \nexttime  p ))$       &       &       &  \\
            & 6     & $ \globally ( h ) \lor  ( m \land  h ) $       &       &       &  \\
      \midrule
      \multirow{8}[2]{*}{ATM} & 1     & $ \future (( \neg p  \land  ( \nexttime  \neg l ) ) \lor  ( \neg m \land  ( p \land   \neg l )))$       & 1     & $ \future ( ( \neg p \land  ( \nexttime  \neg l ) ) \lor  ( \neg m \land  ( p \land  \neg l ) ) )$       & {\color{red}1},2,3,4,5,6,7,8 \\
            & 2     & $ \future (( \neg p \land  ( m \lor  ( \globally  \neg l  ) ) ) \lor  ( \neg m \land  ( p \land  \neg  l ) ) )$      &       &       &  \\
            & 3     & $ \future ( (( \neg m \land  ( p \land  ( \neg l ) ) ) \ \until ( \neg p \land  ( m \lor  ( \nexttime  \neg l ) ))) \lor  \globally ( \neg m \land  ( p  \land  \neg l ) ) )$      &       &       &  \\
            & 4     & $ \future ( \globally ( (\neg p \ \until ( \neg m \land  ( p \land  \neg l  ) )) \lor  \globally  \neg p ) )$      &       &       &  \\
            & 5     & $ (( \neg m \land  p ) \ \until ( \neg p \land  ( m \lor  ( \nexttime  \neg l ) ) )) \lor  \globally ( \neg m \land  p )$      &       &       &  \\
            & 6     & $ ( \neg p \land  ( m  \lor  ( \nexttime  \neg l ) ) ) \lor  ( \neg m  \land  ( p \land  \neg l ) )$      &       &       &  \\
            & 7     & $ \nexttime ( ( \neg p \land  ( m \lor  ( \nexttime  \neg l ) ) ) \lor  ( \neg m \land  ( p \land  \neg l ) ) )$      &       &       &  \\
            & 8     & $ \globally ( ( ( \neg p \ \until ( \nexttime  l )) \lor  \globally  \neg p  ) \lor  ( \neg m  \land  \neg l ) )$      &       &       &  \\
      \midrule
      \multirow{4}[2]{*}{RRCS} & 1     & $ \future ( ( \future ( cc \land  tc ) ) \lor  ( \nexttime ( go \land  ta ) ) )$      & 1     & $ ( cc \land  tc ) \lor  ( \future ( go \land  ta ) )$      & 1,2,{\color{red}3},4 \\
            & 2     & $ ( \future ( cc \land  tc ) ) \lor  ( go \land  ta )$      &       &       &  \\
            & 3     & $ ( cc \land  tc ) \lor  ( \future ( go \land  ta ) )$      &       &       &  \\
            & 4     & $ ( \future ( cc \land  tc ) ) \lor  ( \globally ( go \land  tc ) )$      &       &       &  \\
      \midrule
      \multirow{4}[2]{*}{Tel} & 1     & $ \future (( ( \true \land  \neg d ) \ \until ( \neg c \land  ( \true \land  \neg d ) ) ) \land  c )$      & 1     & $ \future ( ( \neg d \ \until ( f \land  \neg d ) ) \land  c )$      & 1,2,{\color{red}3},4 \\
            & 2     & $ \future ( ( \neg d \ \until ( o \land  \neg d ) ) \land  c )$      &       &       &  \\
            & 3     & $ \future ( ( \neg d \ \until ( f \land  \neg d ) ) \land  c )$      &       &       &  \\
            & 4     & $ \future ( \nexttime ( ( ( \neg d \ \until ( \neg c \land  \neg d )) \lor  (\globally  \neg d )) \land  c ) )$      &       &       &  \\
      \midrule
      \multirow{5}[2]{*}{RRA} & 1     & $ \future ( ( (( \nexttime ( r1 \land  ( \globally  \neg g1 ))) \ \until ( \nexttime ( g0 \land  g1 ) )) \lor  \globally ( \nexttime ( r1 \land  ( \globally  \neg g1  ) )) 
) \lor  ( r0 \land  ( \globally  g1 ) ) )$      & 1     & $ \future ( ( \nexttime (r1 \land  ( \globally  \neg g1 ) ) ) \lor  ( r0 \land  ( \globally  g1 ) ) )$      & 1,{\color{red}2},3,4,5 \\
            & 2     & $ \future ( ( \nexttime (r1 \land  ( \globally  \neg g1 ) ) ) \lor  ( r0 \land  ( \globally  g1 ) ) )$      &       &       &  \\
            & 3     & $ ( (( \nexttime ( r1 \land  ( \globally  \neg g1 ) ) ) \ \until ( g0 \land  g1 )) \lor  \globally ( \nexttime ( r1 \land  ( \globally  \neg g1 ) ) ) ) \lor  ( r0 \land  ( 
\globally  g1 ) )$      &       &       &  \\
            & 4     & $ \nexttime ( (( \nexttime ( r1 \land  ( \globally  g0 ) ) ) \ \until ( r0 \land  ( \globally  r1  ) )) \lor  \globally ( \nexttime ( r1 \land  ( \globally  g0  ) ) ) )$      &       &       &  \\
            & 5     & \tabincell{l}{$ (( (( \nexttime ( r1 \land  ( \globally  \neg g1 ) ) ) \ \until ( \nexttime ( r1 \land  g1 ) )) \lor  \globally ( \nexttime ( r1 \land  ( \globally  \neg g1 ) ) ) ) \ \until ( r0  \land  ( \globally  g1 ) )) \lor $\\$ \globally ( (( \nexttime ( r1 \land  ( \globally  \neg g1 ) ) ) \ \until  ( \nexttime ( r1 \land  g1 ) )) \lor  \globally ( \nexttime (r1 \land  ( \globally  \neg g1 ) ) ) )$}      &       &       &  \\
      \bottomrule
      \end{tabular}%
    }
\end{table}%
  
\end{landscape}

\subsection{Benchmarks}

We evaluated contrasty on $16$ different cases introduced by~\cite{degiovanni2018genetic}.
The details of each case are shown in Table~\ref{tab:case} including the numbers of domain properties (column `\#Dom'), goals (column `\#Goal'), variables (column `\#Var'), and the total size of all formulae (column `Size') for the specification of each case.
The order of the cases is sorted by the size of all formulae from small to large.

\subsection{Experimental Setups}

We used the following experimental setups.
\begin{itemize}
    \item We employed the state-of-the-art BC solver\footnote{\url{http://dc.exa.unrc.edu.ar/staff/rdegiovanni/ASE2018.html}}~\cite{degiovanni2018genetic} denoted by \GA to identify BCs. 
    It is based on a genetic algorithm to search BCs.
    \item We followed the configuration of \GA described in the paper~\cite{degiovanni2018genetic} including the size of the initial population generated from such a specification and the limit of $50$ generations, \ie, $50$ evolutions of the genetic algorithm population.
    \item We invoked Aalta~\cite{li2015sat} as the LTL satisfiability checker to check whether an LTL formula is a BC, whether one BC is more general than the other, and whether one BC is a witness of the other.
    Note that \GA~\cite{degiovanni2018genetic} also used Aalta as the LTL satisfiability checker.
    \item We computed the likelihood of a BC by the method~\cite{degiovanni2018goal}. And we set $k$ to $1000$, which is used in the paper~\cite{degiovanni2018goal} for good accuracy.
    \item For each case, we ran the algorithm $10$ times and reported the mean data.
    \item All the experiments were run on the $2.13$GHz Intel E$7$-$4830$, with $128$ GB memory under GNU/Linux (Ubuntu $16.04$). 
\end{itemize}

\subsection{Experimental Results}

Table~\ref{tab:numBC} summarizes the number of BC in $\BCs$ (`$|\BCs|$'), $\GBCs$ (`$|\GBCs|$'), and $\CBCs$ (`$|\CBCs|$'), where the column `\#suc.' means the number of successful runs (out of 10 runs).
If \GA fails in all 10 runs, the results are marked by `N/A'. 
Overall, our method can solve all the cases that can be solved by \GA to identify BCs.
Clearly, if the solver cannot identify BCs, our method cannot perform the post-processing.

\begin{table}[htbp]
    \centering
    
    \caption{The number of BC recommended by different metrics}\label{tab:numBC}%
    \scalebox{1}{
      \begin{tabular}{c|r|r|r|r}
      \toprule
      Case  & \multicolumn{1}{c|}{$|\BCs|$} & \multicolumn{1}{c|}{$|\GBCs|$} & \multicolumn{1}{c|}{$|\CBCs|$} & \multicolumn{1}{c}{\#suc.} \\
      \midrule
      RP1   & 37.1  & 3.2   & \textbf{1.2} & 10 \\
      RP2   & 35.1  & 2.6   & \textbf{1.2} & 10 \\
      Ele   & 28    & 3.2   & \textbf{2.6} & 10 \\
      TCP   & 53.9  & 2.1   & \textbf{1.5} & 10 \\
      AAP   & 50.3  & 3.7   & \textbf{1.8} & 10 \\
      MP    & 40.7  & 4.5   & \textbf{1.4} & 10 \\
      ATM   & 64.4  & 3.4   & \textbf{1.2} & 10 \\
      RRCS  & 27.9  & 3     & \textbf{1} & 10 \\
      Tel   & 36.5  & 3 & \textbf{1} & 2 \\
      LAS   & N/A   & N/A   & N/A   & N/A \\
      PA    & N/A   & N/A   & N/A   & N/A \\
      RRA   & 40.571 & 3.14 & \textbf{1} & 7 \\
      SA    & N/A   & N/A   & N/A   & N/A \\
      LB    & N/A   & N/A   & N/A   & N/A \\
      LC    & N/A   & N/A   & N/A   & N/A \\
      AMBA  & N/A   & N/A   & N/A   & N/A \\
      \bottomrule
      \end{tabular}%
    }
\end{table}%

For most cases, \GA returns a large number of BCs thanks to the development of search-based methods. 
Note that such a large set of BC can cause a huge burden in the assessment stage and the resolution stage.
Seeing the columns `$|\GBCs|$' and `$|\CBCs|$', we observe that the size of $\CBCs$ is much smaller than that of $\GBCs$ for all cases.
It means that, compared with the generality metric, the contrasty metric can considerably reduce the number of BCs to be analyzed by engineers.


Table~\ref{tab:BC} summarizes the results of the different metrics, for the BCs identified for each of the case studies.
We selected the data that \GA got the most number of BC from $10$ times experiments for display.
The column `GL' (\resp `CL') illustrates the BCs (`BC') in $\GBCs$ (\resp $\CBCs$) and their rank (`Rank') based on the likelihood metric.
We also use the column `Rank' as the identification of BCs.
For every BCs $\phi$ in $\CBCs$, we report which BCs in $\GBCs$ (`Witness') $\phi$ is a witness of and the identification of $\phi$ in $\GBCs$ is marked in red.

For all cases, $\CBCs$ is much smaller than $\GBCs$ and $\CBCs$ is a subset of $\GBCs$, which confirms that the contrasty metric is a more finer-grained metric than the generality metric.
The results also show that a set of general BCs still retains the BCs that represent the same divergence.
Particularly, for MP, ATM, and RRA, the redundant BCs are too much to assess and resolve divergences efficiently.

From the column `Witness', every BC in $\GBCs$ can find a witness of it in $\CBCs$. 
This observation means that the BCs in $\CBCs$ capture all the divergences captured by the BCs in $\GBCs$.
Therefore, engineers only need to consider the BCs in $\CBCs$ when resolving divergences.
In addition, we also observe that the contrastive BCs rank lower in $\GBCs$ in Ele, TCP, AAP, MP, RRCS, TEL, and RRA. 
The reason, as mentioned above, is that the circumstances that cannot describe the divergence lead to mistakes of likelihood.
Such mistakes are serious, which will prevent engineers from grasping the main cause of the divergence quickly. 
It leads to costly assessing and resolving the same divergence repeatedly.

In summary, the generality metric cannot capture the difference between BCs. 
Surprisingly, lots of BCs identified by the state-of-the-art BC solver are redundant in most cases. 
It puts an expensive burden on assessing and resolving divergences.
The method we propose can compare this well and give a recommendation that is more conducive to saving the costs of assessing and resolving divergences.

\section{Joint Framework}\label{sec:joint}

In this section, we design a \underline{j}oint framework to interleave \underline{f}iltering based on the \underline{c}ontrasty metric with identifying BCs (\JAc).
We first introduce the termination condition for identifying BCs and then propose \JAc.

Motivated by the blocking clause approach to solving All-SAT problem~\cite{mcmillan02}, we consider excluding the circumstances captured by identified BCs in the search process to generate a search bias towards the BCs that capture different divergences.
Specifically, in the process of searching for BCs, once a BC $\phi$ is identified, we add $\neg \phi$ as an additional constraint to domain properties.
The additional constraint makes the domain properties dynamically change so that it can prevent the same circumstances from being identified as a BC again (Theorem~\ref{thm:isNotWit}).
Moreover, we will prove that the BCs under the additional constraint are also BCs under the original domain properties and goals (Theorem~\ref{thm:isOriBC}).

Before introducing \JAc, We first propose a sufficient condition for the case where there does not exist a BC (called {\em BC termination condition}).
    
\begin{thm}\label{thm:endBC}
    Let $Dom$ be domain properties and $G$ goals.
    If $\exists 1 \leq i \leq |G|, Dom \land G_{-i} \land \neg G_i \models \bot$, then there does not exist a BC under $Dom$ and $G$.
\end{thm}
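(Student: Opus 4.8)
The plan is to argue by contradiction, leveraging the minimality clause of Definition~\ref{def:bc} at precisely the index $i$ supplied by the hypothesis. First I would restate the premise in its entailment form: $Dom \land G_{-i} \land \neg G_i \models \bot$ is equivalent to $Dom \land G_{-i} \models G_i$, i.e., under the domain properties the remaining goals already force the $i$-th goal. Intuitively this says $G_i$ is redundant given $Dom$ and $G_{-i}$, so discarding it cannot genuinely weaken the specification.

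Assume toward a contradiction that some BC $\varphi$ exists under $Dom$ and $G$. By the minimality property instantiated at this same $i$, we have $Dom \land G_{-i} \land \varphi \not\models \bot$, so there is a linear-time structure $W$ with $W \models Dom \land G_{-i} \land \varphi$. Since $W \models Dom \land G_{-i}$, the reformulated hypothesis gives $W \models G_i$, and because $G = G_{-i} \land G_i$ this yields $W \models Dom \land G \land \varphi$. Hence $Dom \land G \land \varphi$ is satisfiable, directly contradicting the logical-inconsistency property $Dom \land G \land \varphi \models \bot$. Therefore no BC can exist under $Dom$ and $G$.

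The argument invokes only the logical-inconsistency and minimality clauses; the non-triviality clause is never needed, which I would remark on explicitly. There is no substantial technical obstacle here — the only points requiring care are notational bookkeeping: recognizing that $G$ decomposes as $G_{-i} \land G_i$, and reading the minimality clause at exactly the index $i$ for which the hypothesis is assumed. The crux, and the step I would foreground, is the observation that the hypothesis makes $G_i$ logically derivable from $Dom \land G_{-i}$, so any model witnessing minimality at index $i$ is compelled to satisfy the full goal conjunction as well, collapsing the minimality witness into a violation of logical inconsistency.
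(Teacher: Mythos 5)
Your proof is correct and takes essentially the same route as the paper's: the paper proves the contrapositive (a BC forces $Dom \land G_{-i} \land \neg G_i \not\models \bot$ for every $i$) by combining logical inconsistency, rewritten as $\varphi \to \neg(Dom \land G)$, with the minimality witness at index $i$ — exactly the model $W$ and the two clauses your contradiction argument uses, just applied in the opposite direction. The only cosmetic difference is that you rewrite the hypothesis as $Dom \land G_{-i} \models G_i$ and contradict logical inconsistency, whereas the paper rewrites logical inconsistency and contradicts the hypothesis; the underlying argument is identical.
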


\begin{proof}[Sketch of proof]
    We prove that if there exists a BC, then $\forall 1 \leq i \leq |G|, Dom \land G_{-i} \land \neg G_i \not \models \bot$.
    If there is a BC $\phi$ under $Dom$ and $G$, then $Dom \land G \land \phi \models \bot$ (logical inconsistency) and $\forall 1 \leq i \leq |G|, Dom \land G_{-i} \land \phi \not \models \bot$ (minimality).
    Because of the logical inconsistency, we have $\phi \to \neg (Dom \land G)$.
    Therefore, $Dom \land G_{-i} \land \phi \to Dom \land G_{-i} \land \neg (Dom \land G)$.
    Consider the minimality, we have $\forall 1 \leq i \leq |G|, Dom \land G_{-i} \land \neg (Dom \land G) \not \models \bot$, \ie, $\forall 1 \leq i \leq |G|, Dom \land G_{-i} \land \neg G_i \not \models \bot$.
\end{proof}

Based on Theorem~\ref{thm:endBC}, we can check whether there still exists a BC under the dynamical domain properties and goals.

\begin{algorithm}[h]
    \caption{\JAc}\label{alg:JAc}
    \KwIn{domain properties $Dom$ and goals $G$.}
    \KwOut{a set of contrastive BCs $\CBCs$.}

    $\CBCs \gets \emptyset$;\\

    \While{True}
    {
        $isEnd, \phi \gets$\Call{callBCSolver}{$Dom \cup \{ \neg \varphi | \varphi \in \CBCs \}$, $G$};\label{alg:JAc-callsolver}\\
        \If{$isEnd$}
        {
            \Return{$\CBCs$};\label{alg:JAc-solverEnd}\\
        }
        \Else
        {
            $W \gets$\Call{internalContrastyFilter}{$\phi$, $\CBCs$};\label{alg:JAc-inConFil}\\
            $\CBCs \gets \CBCs / W$;\label{alg:JAc-remove}\\
            \If{there is not a BC under $Dom \cup \{ \neg \varphi | \varphi \in \CBCs \}$ and $G$}
            {\label{alg:JAc-isEnd}
                \Return{$\CBCs$};\\
            }
        }
    }
\end{algorithm}

\begin{algorithm}[h]
    \caption{\funFont{internalContrastyFilter}}\label{alg:incontFil}
    \KwIn{a BC $\phi$ and a set of BCs $\BCs$.}
    \KwOut{a set of BCs $W$ filtered by $\phi$.}

    $W \gets \emptyset$;\\
    \For{each BC $\varphi \in \BCs$}
    {
        \If{$\phi$ is a witness of $\varphi$}{
            \label{alg:incontFil-wit}
            $W \gets W \cup \{ \varphi \}$;\\
        }
    }
    \Return{True, $W$};\\
\end{algorithm}

\JAc takes the domain properties $Dom$ and goals $G$ as inputs. 
Its output is a set of contrastive BCs $\CBCs$. 
The pseudo code is outlined in Algorithm~\ref{alg:JAc}.
In order to identify BCs, we involve existing BC solvers, \eg, \GA~\cite{degiovanni2018genetic} and \TB~\cite{degiovanni2016goal} (Alg.~\ref{alg:JAc} of line~\ref{alg:JAc-callsolver}).
Note that we consider the dynamical domain properties ($Dom \cup \{ \neg \varphi | \varphi \in \CBCs \}$).
If the BC solver terminates, we return $\CBCs$ (Alg.~\ref{alg:JAc} of line~\ref{alg:JAc-solverEnd}). 
Otherwise, unlike \PPAc, we update $\CBCs$ when identifying a new BC (Alg.~\ref{alg:JAc} of line~\ref{alg:JAc-inConFil}-\ref{alg:JAc-remove}).
Note that we only remove the BCs which the new BC is a witness of (Alg.~\ref{alg:incontFil} of line~\ref{alg:incontFil-wit}) because none of the BCs in $\CBCs$ is a witness of the new BC (Theorem~\ref{thm:isNotWit}).
Afterward, if there still exists a BC under $Dom \cup \{ \neg \varphi | \varphi \in \CBCs \}$ and $G$, we continue to involve BC solver; otherwise, return $\CBCs$ (Alg.~\ref{alg:JAc} of line~\ref{alg:JAc-isEnd}).

\begin{thm}\label{thm:isOriBC}
    Let $Dom$ be domain properties, $G$ goals, and $\BCs$ a set of BCs that has been identified.
    A LTL formula $\phi$ is a BC under $Dom$ and $G$, if $\phi$ is a BC under $Dom \cup \{ \neg \varphi | \varphi \in \BCs \}$ and $G$.
\end{thm}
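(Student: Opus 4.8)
The plan is to check, one by one, the three defining conditions of a boundary condition from Definition~\ref{def:bc} --- logical inconsistency, minimality, and non-triviality --- for $\phi$ under $Dom$ and $G$, starting from the assumption that all three hold under the augmented domain. Write $Dom^\ast$ for the conjunction $Dom \land \bigwedge_{\varphi \in \BCs} \neg \varphi$ corresponding to $Dom \cup \{\neg\varphi \mid \varphi \in \BCs\}$. Throughout I will use the standing fact that every $\varphi \in \BCs$ is itself a BC under the \emph{original} $Dom$ and $G$, and in particular satisfies logical inconsistency there, i.e. $Dom \land G \land \varphi \models \bot$. This fact is not circular: it follows by induction on the order in which Algorithm~\ref{alg:JAc} inserts BCs into $\CBCs$, the base case being $\BCs = \emptyset$ (where $Dom^\ast = Dom$) and the inductive step being exactly this theorem applied to the strictly smaller set of previously identified BCs.

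I would dispatch the two cheap conditions first. Non-triviality, $\neg G \not\equiv \phi$, refers only to $G$ and $\phi$ and makes no mention of the domain, so it transfers verbatim from $Dom^\ast$ to $Dom$. For minimality, the key observation is the entailment $Dom^\ast \models Dom$, since $Dom^\ast$ is $Dom$ strengthened by the extra conjuncts $\neg\varphi$. Consequently any model witnessing $Dom^\ast \land G_{-i} \land \phi \not\models \bot$ is already a model of $Dom \land G_{-i} \land \phi$, giving $Dom \land G_{-i} \land \phi \not\models \bot$ for every $i$. Intuitively, erasing domain constraints can only make these conjunctions easier to satisfy, so minimality survives the weakening of the domain.

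The real work --- and the main obstacle --- is logical inconsistency, because here weakening the domain removes conjuncts from the \emph{unsatisfiable} side and so does not preserve unsatisfiability for free. I would argue by contradiction. Suppose $M \models Dom \land G \land \phi$. The hypothesis gives $Dom^\ast \land G \land \phi \models \bot$, that is, $Dom \land G \land \phi \land \bigwedge_{\varphi \in \BCs} \neg\varphi \models \bot$; hence $M$ cannot additionally satisfy $\bigwedge_{\varphi \in \BCs} \neg\varphi$, forcing $M \models \bigvee_{\varphi \in \BCs} \varphi$ and therefore $M \models \varphi_0$ for some $\varphi_0 \in \BCs$. But then $M$ is a model of $Dom \land G \land \varphi_0$, contradicting the standing fact $Dom \land G \land \varphi_0 \models \bot$. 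Thus no such $M$ exists and $Dom \land G \land \phi \models \bot$. Having re-established all three conditions, I would conclude that $\phi$ is a BC under $Dom$ and $G$, as required. The only genuinely load-bearing ingredient is the logical-inconsistency-under-$Dom$ property of the members of $\BCs$; if that were dropped the claim would fail, which is why I would highlight the inductive justification of the standing fact as the crux of the argument.
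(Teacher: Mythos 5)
Your proof is correct and follows essentially the same route as the paper's: non-triviality transfers trivially, minimality follows because models of the strengthened domain are models of the original one, and logical inconsistency hinges on the fact that each $\varphi \in \BCs$ already satisfies $Dom \land G \land \varphi \models \bot$ --- the paper packages this last step as the equivalence $Dom \land (\bigwedge_{\varphi \in \BCs} \neg \varphi) \land G \land \phi \equiv Dom \land G \land \phi$, whereas you phrase it as a proof by contradiction with an explicit model, which is the same idea. Your added inductive justification that members of $\BCs$ are BCs under the original $Dom$ and $G$ is a sound observation, though the paper simply takes that as part of the hypothesis.
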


\begin{proof}[Sketch of proof]
    Because $\forall \varphi \in \BCs$ is a BC under $Dom$ and $G$, we have $Dom \land (\bigwedge_{\varphi \in \BCs} \neg \varphi) \land G \land \phi \equiv Dom \land G \land \phi$. 
    Therefore, $Dom \land (\bigwedge_{\varphi \in \BCs} \neg \varphi) \land G \land \phi \models \bot$ (logical inconsistency) holds.
    Because $Dom \land (\bigwedge_{\varphi \in \BCs} \neg \varphi) \land G_{-i} \land \phi \to Dom \land G_{-i} \land \phi$, $\forall 1 \leq i \leq |G|, Dom \land (\bigwedge_{\varphi \in \BCs} \neg \varphi) \land G_{-i} \land \phi \not \models \bot$ (minimality) holds.
    The non-triviality obviously holds.
\end{proof}

Theorem~\ref{thm:isOriBC} shows that although the additional constraint is considered, the results are still BCs under the original domain properties and goals.

\begin{thm}\label{thm:isNotWit}
    In Algorithm~\ref{alg:JAc}, $\nexists \varphi \in \CBCs$ \St~$\varphi$ is a witness of $\phi$.
\end{thm}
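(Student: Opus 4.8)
The plan is to prove the claim by unwinding the definition of \emph{witness} (Definition~\ref{def:witness}) and exploiting the fact that every $\varphi \in \CBCs$ has had its negation injected into the domain properties before $\phi$ was identified. Recall that $\phi$ is identified by the call \Call{callBCSolver}{$Dom \cup \{ \neg \varphi \mid \varphi \in \CBCs \}$, $G$} on line~\ref{alg:JAc-callsolver}, so at the moment $\phi$ enters the algorithm it is a BC under the \emph{dynamical} domain properties $Dom' = Dom \cup \{ \neg \varphi \mid \varphi \in \CBCs \}$. The goal is to show that no $\varphi \in \CBCs$ can be a witness of $\phi$, i.e., that $\phi \land \neg \varphi$ remains a BC for every such $\varphi$.

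First I would fix an arbitrary $\varphi \in \CBCs$ and contemplate $\phi \land \neg \varphi$. The key observation is that $\neg \varphi$ is already a conjunct of $Dom'$, so under $Dom'$ the formula $\phi$ already entails $\neg \varphi$; more precisely, the constraint set $Dom'$ forces every model to satisfy $\neg \varphi$, hence $Dom' \land \phi \equiv Dom' \land \phi \land \neg \varphi$. This equivalence is the crux: it means that adding the conjunct $\neg \varphi$ to $\phi$ changes nothing about satisfiability or minimality \emph{relative to the domain properties $Dom'$ under which $\phi$ was certified to be a BC}. Since $\phi$ satisfies logical inconsistency, minimality, and non-triviality under $Dom'$ and $G$, and since conjoining $\neg \varphi$ is vacuous modulo $Dom'$, the formula $\phi \land \neg \varphi$ satisfies exactly the same three conditions under $Dom'$ and $G$, so it is still a BC. By Definition~\ref{def:witness}, $\varphi$ being a witness of $\phi$ would require $\phi \land \neg \varphi$ to \emph{not} be a BC; we have just shown the opposite, contradicting the supposition. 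Therefore no $\varphi \in \CBCs$ is a witness of $\phi$.

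The step I expect to be the main obstacle is making the notion of ``BC under $Dom'$'' line up cleanly with the plain notion of ``witness,'' which in Definition~\ref{def:witness} is phrased without reference to any particular domain. The witness condition $\phi \land \neg \varphi$ is a BC is implicitly taken under the same $Dom$ and $G$ that make $\phi$ a BC, so I must be careful to argue that the relevant BC-status of $\phi \land \neg \varphi$ is assessed under the dynamical $Dom'$ rather than the original $Dom$. One clean way to discharge this is to appeal to Theorem~\ref{thm:isOriBC}: anything that is a BC under $Dom'$ and $G$ is also a BC under the original $Dom$ and $G$, so the witness relation computed internally by the algorithm (which operates with $Dom'$) is consistent with the witness relation under $Dom$. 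I would state explicitly that the witness checks in Algorithm~\ref{alg:incontFil} are performed with the dynamical domain properties, so that the equivalence $Dom' \land \phi \equiv Dom' \land \phi \land \neg \varphi$ is exactly the right tool. The remaining verifications---that minimality survives because $Dom' \land G_{-i} \land (\phi \land \neg \varphi) \to Dom' \land G_{-i} \land \phi$ and the latter is satisfiable, and that non-triviality is unaffected---are routine and mirror the corresponding steps in the proof sketch of Theorem~\ref{thm:isOriBC}.
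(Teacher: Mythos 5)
Your proposal is correct and takes essentially the same route as the paper's own proof: the paper likewise exploits that each $\neg\varphi$ for $\varphi \in \CBCs$ is a conjunct of the dynamical domain---phrased there by regrouping $\neg\varphi_j$ into the candidate formula, so that $\phi \land \neg\varphi_j$ is a BC under $Dom \cup \{\neg\psi \mid \psi \in \CBCs, \psi \neq \varphi_j\}$---and then appeals to Theorem~\ref{thm:isOriBC} to transfer BC-status back to the original $Dom$ and $G$, exactly your two steps (the paper merely dresses this in an induction over iterations, while you argue directly for an arbitrary $\varphi \in \CBCs$). One small repair: in your closing paragraph, minimality cannot be justified by the implication $Dom' \land G_{-i} \land (\phi \land \neg\varphi) \to Dom' \land G_{-i} \land \phi$ together with satisfiability of the right-hand side (satisfiability does not propagate backwards along implications); it is exactly the equivalence $Dom' \land \phi \equiv Dom' \land \phi \land \neg\varphi$, which you correctly single out as the crux, that yields satisfiability of the conjunction.
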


\begin{proof}[Sketch of proof]
    We prove Theorem~\ref{thm:isNotWit} by inductive hypothesis as follows.
    \begin{itemize}
        \item At the first iteration where $\CBCs$ is an empty set, assume we get a BC $\varphi_1$, Theorem~\ref{thm:isNotWit} holds.
        \item We suppose that at the $k$-th iteration where we get a BC $\varphi_k$, Theorem~\ref{thm:isNotWit} holds. 
        \item At the $k$+$1$-th iteration where $\CBCs = \{ \varphi_1,\dots,\varphi_k \}$, assume we get a BC $\phi$.
        Because $\phi$ is a BC under $Dom \cup \{ \neg \varphi | \varphi \in \CBCs \}$ and $G$, $\forall 1 \leq i \leq |G|, Dom \land (\bigwedge_{\varphi \in \CBCs} \neg \varphi) \land G_{-i} \land \phi \not \models \bot$. 
        Therefore, for every $\varphi_j \in \CBCs$, $\phi \land \neg \varphi_j$ is a BC under $Dom \cup \{ \neg \varphi | \varphi \in \CBCs \land \varphi \neq \varphi_j \}$ and $G$. 
        Because of Theorem~\ref{thm:isOriBC}, $\phi \land \neg \varphi_j$ is a BC under $Dom$ and $G$.
    \end{itemize}
\end{proof}

Intuitively, based on Theorem~\ref{thm:isNotWit}, \JAc can produce a search bias towards the BCs that capture different divergences. 
 
\begin{thm}\label{thm:conBC}
    In Algorithm~\ref{alg:JAc}, the BCs in the final $\CBCs$ are not witnesses with each other.
\end{thm}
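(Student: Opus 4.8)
The plan is to prove the stronger statement that the working set $\CBCs$ is contrastive in the sense of Definition~\ref{def:contrasty} --- no two distinct members are witnesses of each other --- and to do so as a \emph{loop invariant} of Algorithm~\ref{alg:JAc}. Once the invariant is in hand, the claim of Theorem~\ref{thm:conBC} follows immediately at both return points (lines~\ref{alg:JAc-solverEnd} and~\ref{alg:JAc-isEnd}), where $\CBCs$ is handed back unmodified. I would argue by induction on the number of completed \texttt{while}-loop iterations.

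For the base case, $\CBCs$ is initialized to $\emptyset$, which is vacuously contrastive. For the inductive step, assume $\CBCs$ is contrastive at the top of an iteration. If the solver reports $isEnd$, the set is returned unchanged and the invariant is trivially preserved. Otherwise a fresh BC $\phi$ is produced; \internalContrastyFilter collects into $W$ exactly those $\varphi \in \CBCs$ of which $\phi$ is a witness (Alg.~\ref{alg:incontFil}, line~\ref{alg:incontFil-wit}), these are deleted (line~\ref{alg:JAc-remove}), and the newly identified $\phi$ is added. I would then verify that the updated set $(\CBCs \setminus W) \cup \{\phi\}$ is again contrastive by distinguishing two kinds of unordered pairs. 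Any pair of \emph{old survivors} lies inside $\CBCs \setminus W$, a subset of the previously contrastive $\CBCs$; since every subset of a contrastive set is contrastive, such pairs stay contrastive. For each \emph{new} pair $(\phi, \varphi)$ with $\varphi \in \CBCs \setminus W$, I would combine two facts: (i) because $\varphi$ was \emph{not} placed in $W$, $\phi$ is not a witness of $\varphi$; and (ii) by Theorem~\ref{thm:isNotWit}, no member of the pre-removal $\CBCs$ is a witness of $\phi$, so in particular $\varphi$ is not a witness of $\phi$. By Definition~\ref{def:contrasty}, (i) and (ii) jointly say the pair is contrastive, which closes the induction.

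The main obstacle I anticipate is the precise scoping of Theorem~\ref{thm:isNotWit}. That theorem is stated about the $\CBCs$ parameterizing the dynamic domain properties $Dom \cup \{\neg\varphi \mid \varphi \in \CBCs\}$ supplied to the solver --- that is, the set \emph{before} $W$ is removed. I would therefore apply it to that pre-removal set and then note that $\CBCs \setminus W$ is only smaller, so the conclusion ``no old BC is a witness of $\phi$'' still covers every surviving $\varphi$ used in step (ii). A secondary detail to nail down is that \internalContrastyFilter removes \emph{all and only} the $\varphi$ for which $\phi$ is a witness; this is exactly what makes step (i) an iff rather than a one-sided implication, and it is immediate from the filter's loop guard (line~\ref{alg:incontFil-wit}).

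Notably, no LTL-semantic reasoning is required at this stage: unlike Theorems~\ref{thm:isOriBC} and~\ref{thm:isNotWit}, which carry the model-theoretic weight, Theorem~\ref{thm:conBC} is a purely combinatorial consequence of the filtering rule together with the already-established non-witness property. The only care needed is the bookkeeping of set membership across the removal-then-insertion update, which the invariant formulation handles cleanly.
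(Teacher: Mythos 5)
Your proof is correct and takes essentially the same route as the paper: the paper's one-line justification invokes exactly Theorem~\ref{thm:isNotWit} together with the filtering rule of Algorithm~\ref{alg:incontFil}, which are precisely the two ingredients (ii) and (i) of your inductive step. Your loop-invariant formulation simply spells out the bookkeeping (including the intended, though unwritten, insertion of $\phi$ into $\CBCs$ at line~\ref{alg:JAc-remove}) that the paper dismisses as ``straightforward.''
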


    

It is straightforward to prove Theorem~\ref{thm:conBC} because of Theorem~\ref{thm:isNotWit} and Algorithm~\ref{alg:incontFil}.
Theorem~\ref{thm:conBC} guarantees that Algorithm~\ref{alg:JAc} returns a set of contrastive BCs.

\begin{table*}[t]
    \centering
    \caption{The overall performance of \PPAc and \JAc}\label{tab:PPAcvsJAc}%
    \scalebox{1}{
      \begin{tabular}{c|rrrrr|rrrrr}
      \toprule
      \multirow{2}[4]{*}{Case} & \multicolumn{5}{c|}{\PPAc}             & \multicolumn{5}{c}{\JAc} \\
  \cmidrule{2-11}          & \multicolumn{1}{c}{$|\BCs|$} & \multicolumn{1}{c}{$|\CBCs|$} & \multicolumn{1}{c}{GA t. (s)} & \multicolumn{1}{c}{t. (s)} & \multicolumn{1}{c|}{\#suc.} & \multicolumn{1}{c}{$|\BCs|$} & \multicolumn{1}{c}{$|\CBCs|$} & \multicolumn{1}{c}{\#T} & \multicolumn{1}{c}{t. (s)} & \multicolumn{1}{c}{\#suc.} \\
      \midrule
      RP1   & 37.1  & 1.2   & 157.4 & 224.53 & 10    & \textbf{1} & \textbf{1} & 10    & \textbf{29.5} & 10 \\
      RP2   & 35.1  & 1.2   & 130.2 & 206   & 10    & \textbf{1.1} & \textbf{1.1} & 10    & \textbf{78.9} & 10 \\
      Ele   & 28    & 2.6   & 45.8  & 88.01 & 10    & \textbf{2.1} & \textbf{2.1} & 10    & \textbf{43.4} & 10 \\
      TCP   & 53.9  & 1.5   & 225.1 & \textbf{308.26} & 10    & \textbf{1.4} & \textbf{1.4} & 0     & 801.6 & 10 \\
      AAP   & 50.3  & 1.8   & 65.3  & 208.64 & 10    & \textbf{1} & \textbf{1} & 10    & \textbf{41.3} & 10 \\
      MP    & 40.7  & 1.4   & 59.3  & 146.02 & 10    & \textbf{1} & \textbf{1} & 10    & \textbf{60.8} & 10 \\
      ATM   & 64.4  & 1.2   & 102.2 & 259.19 & 10    & \textbf{1} & \textbf{1} & 10    & \textbf{25.2} & 10 \\
      RRCS  & 27.9  & 1     & 68.3  & 91.87 & 10    & \textbf{1} & 1     & 10    & \textbf{15} & 10 \\
      Tel   & 36.5  & 1     & 35.3  & 46.53 & 2     & \textbf{1} & 1     & 10    & \textbf{27} & \textbf{10} \\
      LAS   & N/A   & N/A   & N/A   & N/A   & 0     & N/A   & N/A   & 0     & N/A   & 0 \\
      PA    & N/A   & N/A   & N/A   & N/A   & 0     & N/A   & N/A   & 0     & N/A   & 0 \\
      RRA   & 40.571 & 1     & 696.43 & 878.7 & 7     & \textbf{1} & 1     & 10    & \textbf{255.1} & \textbf{10} \\
      SA    & N/A   & N/A   & N/A   & N/A   & 0     & N/A   & N/A   & 0     & N/A   & 0 \\
      LB    & N/A   & N/A   & N/A   & N/A   & 0     & N/A   & N/A   & 0     & N/A   & 0 \\
      LC    & N/A   & N/A   & N/A   & N/A   & 0     & N/A   & N/A   & 0     & N/A   & 0 \\
      AMBA  & N/A   & N/A   & N/A   & N/A   & 0     & N/A   & N/A   & 0     & N/A   & 0 \\
      \bottomrule
      \end{tabular}%
    }
\end{table*}%

\section{Experiments}\label{sec:expriments}

In this section, we conducted extensive experiments on a broad range of benchmarks shown in Table~\ref{tab:case} to evaluate the performance of \JAc by comparing with \PPAc.
We first presented the research questions.

\begin{que}\label{RQ2}
    What is the performance of the joint framework (\JAc) for producing the contrastive BC set compared with the post-processing framework (\PPAc)?
\end{que}

\subsection{Experimental Setups}

The experimental setups used in this section was the same as the one described in Section~\ref{sec:exp-adv}.
In addition, we added the new experimental setups.
\begin{itemize}
    \item We set the same BC solver (\GA~\cite{degiovanni2018genetic}) for \PPAc and \JAc.
    \item We invoked Aalta~\cite{li2015sat} to check the BC termination condition.
\end{itemize}

\subsection{Experimental Results}

Table~\ref{tab:PPAcvsJAc} shows the overall performance of \PPAc and \JAc, including the running time of \GA (`GA t.'), the running time of the framework (`t.'), and the number of meeting the BC termination condition (`\#T').
In \JAc, $\BCs$ records all BCs identified during the search.

From the column `$|\CBCs|$', the contrastive BCs obtained by \JAc is slightly less than that obtained by \PPAc. 
This is because \JAc not only considers the contrasty in BC but also considers the BC termination condition where \JAc searches for a set of contrastive BCs that is enough so that there is no BC in the domain properties and goals after avoiding these contrastive BCs.
We also observe that the size of $\BCs$ of \JAc is much smaller than that of \PPAc.
Moreover, for \JAc, the size of $\BCs$ is close to that of $\CBCs$.
These observations show that \JAc produces a strong search bias towards the BCs that are contrastive with the identified BCs.

In \PPAc, the running time of \GA is approximately the same as that of producing a set of contrastive BCs.
And the running time of producing a set of contrastive BCs increases as the number of BCs identified by \GA increases.
In particular, in AAP, MP, and ATM, the running time of producing a set of contrastive BCs is about $1.5$ times that of \GA.
It indicates the drawback of \PPAc, namely, the cost of producing a set of contrastive BCs is proportional to the number of BCs identified by a BC solver.
It is foreseeable that the redundant BCs in $\BCs$ will greatly reduce the efficiency of producing a set of contrastive BCs.

\JAc deals with the drawback of \PPAc, because \JAc uses the identified contrastive BC for pruning during the search process, thereby avoiding searching for the redundant BCs.
The shorter running time for meeting the BC termination condition confirms this conclusion.
If \JAc meets the BC termination condition, \JAc will produce a set of contrastive BCs efficiently.

Particularly, in RP1 and ATM, \JAc is $10$ times faster than \PPAc.
We also observe that if \JAc does not meet the BC termination condition (only TCP), \JAc is slower than \PPAc.
It is reasonable because \JAc additionally checks the BC termination condition after finding a new BC.

Conclusively, \JAc produces the search bias towards contrastive BCs.
In addition, the efficiency of \JAc is not limited to the number of BCs identified by a BC solver.

\section{Related Work}\label{sec:relatedwork}

Inconsistency management, \ie, how to deal with inconsistencies in requirements, has also been the focus of several studies, in particular on the formal side.
Besides the inconsistency management approaches based on the informal or semi-formal methods, such as~\cite{hausmann2002detection,herzig2014conceptual,kamalrudin2009automated,kamalrudin2011improving}, a series of formal approaches~\cite{ellen2014detecting,ernst2012agile,harel2005synthesis,nguyen2014kbre} recently have been proposed, which only focus on logical inconsistency or ontology mismatch.
Another related approach is proposed by Nuseibeh and Russo~\cite{nuseibeh1999using}, which generates the conjunction of ground literals as an explanation for the unsatisfiable specification based on abduction reasoning.
As for consistency checking methods, we have to mention the approach of Harel et al.~\cite{harel2005synthesis}, which identifies inconsistencies between two requirements represented as conditional scenarios. 
Moreover, the work \cite{jureta2010techne,liu2010ontology,mairiza2010constructing} studied the reasoning about conflicts in requirements.
In this paper, we focus on the situations that lead to goal divergences, which are nothing but weak inconsistencies.

Goal-conflict analysis has been widely used as an abstraction for risk analysis in GORE.
It is typically driven by the identify-assess-control cycle, aimed at identifying, assessing and resolving inconsistencies that may obstruct the satisfaction of the expected goals.

In identifying inconsistencies, we have to mention the work on obstacle analysis.
An obstacle, first proposed in~\cite{van2000handling}, is a particular goal conflict, which captures the situation that only one goal is inconsistent with the domain properties.
Alrajeh et al.~\cite{alrajeh2012generating} exploited the model checking technique to generate tracks that violate or satisfy the goals, and then to compute obstacles from these tracks based on the machine learning technique.
Other approaches for obstacle analysis include~\cite{cailliau2012probabilistic,cailliau2014integrating,cailliau2015handling,van2000handling}.
Whereas, as obstacles only capture the inconsistency for single goals, these approaches fail to deal with the situation where multiple goals are conflicting.

In this work, we focus on the other inconsistencies -- boundary condition.
Let us come back to the problem of identifying BCs.
Existing approaches mainly categorize into construct-based approaches and search-based approaches.
For construct-based approaches, Van Lamsweerde et al.~\cite{van1998managing} proposed a pattern-based approach which only returns a BC in a pre-defined limited form.
Degiovanni et al.~\cite{degiovanni2016goal} exploited a tableaux-based approach that generates general BCs but only works on small specifications because tableaux are difficult to be constructed.

For the search-based approach, Degiovanni et al.~\cite{degiovanni2018genetic} presented a genetic algorithm which seeks for BCs and handles specifications that are beyond the scope of previous approaches.
Moreover, Degiovanni et al.~\cite{degiovanni2018genetic} first proposed the concept of generality to assess BCs.
Their work filtered out the less general BCs to reduce the set of BCs.
However, the generality is a coarse-grained assessment metric.

As the number of identified inconsistencies increases, the assessment stage and the resolution stage become very expensive and even impractical.
Recently, the assessment stage in GORE has been widely discussed to prioritize inconsistencies to be resolved and suggest which goals to drive attention to for refinements.
However, some of the work~\cite{van2000handling,alrajeh2012generating,cailliau2012probabilistic,cailliau2014integrating,cailliau2015handling} assume that certain probabilistic information on the domain is provided and analyzes to simpler kinds of inconsistencies (obstacles).

In order to automatically assess BCs, Degiovanni et al.~\cite{degiovanni2018goal} recently have proposed an automated approach to assess how likely conflict is, under an assumption that all events are equally likely.
They estimated the likelihood of BCs by counting how many models satisfy a circumstance captured by a BC.
However, the number of models cannot accurately indicate the likelihood of divergence, because not all the circumstances captured by a BC result in divergence.
In this paper, we discovered the drawbacks and proposed a new metric to avoid evaluation mistakes for the likelihood.

For the resolution of conflicts, Murukannaiah et al.~\cite{murukannaiah2015resolving} resolved the conflicts among stakeholder goals of system-to-be based on the Analysis of Competing Hypotheses technique and argumentation patterns.
Related works on conflict resolution also include~\cite{felfernig2009plausible} which calculates the personalized repairs for the conflicts of requirements with the principle of model-based diagnosis.

However, these approaches presuppose that the conflicts have been already identified and our approach for boundary condition discovery provides a footstone for solving these problems.
Let us recall Example~\ref{example}.
Letier et al.~\cite{letier2001reasoning} resolved the BC by refining the first goal as: the pump is switched on when the water level is high and there is no methane. Formally, $\globally((h \wedge \neg m) \rightarrow \nexttime(p))$.

\section{Conclusion and Future Work}\label{sec:conclusion}

Providing a reasonable set of BCs for assessing and resolving divergences is of great significance both from an economical perspective and an impact on software quality.
In this paper, we have proposed a new metric, contrasty, to deal with the drawbacks caused by the generality metric.
Because BCs are ultimately used for resolving divergences, we argue that the identified BCs should help to assess and resolve divergences.
The contrasty metric mainly distinguishes the difference between BCs from the point of resolving divergences.
Experimental results have shown the advantage of contrasty metric, namely, it filters out the BCs capturing the same divergence.
It helps to avoid costly reworks, \ie, assessing and resolving the same divergence captured by redundant BCs.
In addition, we have designed a joint framework to improve the performance of the post-processing framework.

Future work will extend our contrasty metric to the assessment stage and the resolution stage.

\section*{Acknowledgment}
We thank Fangzhen Lin, Yongmei Liu, Jianwen Li, and Ximing Wen for discussion on the paper and anonymous referees for helpful comments. 

\bibliographystyle{IEEEtranS}
\bibliography{cbc}

\end{document}